\numberwithin{equation}{section}
\newtheorem{lemma}{Lemma}
\newtheorem{theorem}{Theorem}
\newtheorem{prop}{Proposition}
\newcommand{\bbR}{\mathbb{R}}
\newcommand{\mm}{\mathcal{M}}
\newcommand{\ve}{\varepsilon}
\newcommand{\mP}{\mathcal{P}}
\newcommand{\mL}{\mathcal{L}}
\newcommand{\rk}{\mathop {\rm rank}\nolimits}
\newcommand{\sgrad}{\mathop {\rm sgrad}\nolimits}
\newcommand{\Tr}{\mathop {\rm Trace}\nolimits}
\newcommand{\bda}{{\boldsymbol \alpha}}
\newcommand{\bdb}{{\boldsymbol \beta}}
\newcommand{\bdo}{{\boldsymbol \omega}}
\def\rmd{{\rm d}}
\def\bdot{{\boldsymbol \cdot}}
\def\rmi{{\rm i}}
\begin{document}

\title{Extensions of the Appelrot classes\\ for the generalized gyrostat in a double force field}

\author{M. P. Kharlamov}

\date{05.09.2013}

\maketitle



\begin{abstract}
For the integrable system on $e(3,2)$ found by Sokolov and Tsiganov we obtain explicit equations of some invariant 4-dimensional manifolds on which the induced systems are almost everywhere Hamiltonian with two degrees of freedom. These subsystems generalize the famous Appelrot classes of critical motions of the Kowalevski top. For each subsystem we point out a commutative pair of independent integrals, describe the sets of degeneration of the induced symplectic structure. With the help of the obtained invariant relations, for each subsystem we calculate the outer type of its points considered as critical points of the initial system with three degrees of freedom.

MSC 70E05,70E17,37J15, 37J20
\end{abstract}


\tableofcontents

\section{Introduction}\label{secc}
In the beginning of the 20th century G.G.\,Appelrot (H.\,Appelroth), in a series of his publications, gave deep and profound investigations of the motions of the Kowalevski top \cite{Kowa}. In \cite{App3,App4}, he found some specific families of motions and called them {\it especially remarkable motions}. These results were then included into Appelrot's large treatise on the gyroscopes theory and published under common title as \cite{Appel}. Appelrot defined especially remarkable motions as such solutions of the corresponding Euler\,--\,Poisson equations for which one of the separated Kowalevski variables remains constant. He pointed out four classes of such motions depending on various constraints on the integral constants leading to multiple roots of the Kowalevski polynomial involved in the separated equations. The corresponding solutions are expressed as elliptic functions of time. More thorough analytical and geometrical study of these motions was given by A.F.\,Ipatov \cite{Ipat}. Later it appeared that the Appelrot classes play the main role in the topological analysis of the Kowalevski case; the union of all trajectories of the especially remarkable motions in the phase space form a critical set of the integral map determining bifurcations of the integral manifolds \cite{KhPMM} (the rough topology of the Liouville foliation) and exact topological invariants of the problem \cite{BRF}.

The Appelrot classes received many generalizations in various directions. In \cite{PVLect} a new integrable case was found for a heavy dynamically symmetric gyrostat (a rigid body with a rotor characterized by constant gyrostatic momentum ${\boldsymbol \lambda}$); this case includes the Bobylev\,--\,Steklov case and the 4th Appelrot class as partial problems under some restrictions on the parameters. Another two exact solutions for a gyrostat with the Kowalevski type inertia tensor were found in \cite{EIPVHDan, PVMtt71}. These solutions cover the 2nd and the 3rd Appelrot classes as ${\boldsymbol \lambda}$ tends to zero. After the Kowalevski type heavy gyrostat was proved to be completely integrable \cite{Yeh}, it appeared that the motions pointed out in \cite{PVLect,EIPVHDan, PVMtt71} form a critical set of the Kowalevski\,--\,Yehia gyrostat \cite{Ryab1}. Other analogues of the Appelrot classes are associated with the problem of the rigid body motion in a \textit{double force field} (sometimes referred to as \textit{two constant fields}). This field could be, for example, a superposition of the gravity field and the constant magnetic field affecting a heavy magnetized rigid body with a fixed point. The equations of motion were given by O.I.\,Bogoyavlensky in \cite{Bogo} and presented as the Euler equation on the Lie co-algebra $e(3,2)^*$. In the same paper a partial integrable case (the subsystem with two degrees of freedom) was found generalizing the 1st Appelrot class.
More general system including a double field and a non-zero gyrostatic momentum was proved integrable in \cite{ReySem}. In this system, the generalizations of the 2nd, 3rd and 4th Appelrot classes with ${\boldsymbol \lambda}=0$ were discovered in \cite{Kh32, KhRCD}. For non-zero gyrostatic momentum, analogues of the Appelrot classes were found in \cite{KhHMJ}. In the general case these analogues also form a critical set of the integrable map; the critical trajectories are organized in almost everywhere Hamiltonian systems with less than three degrees of freedom (so-called critical subsystems). The knowledge of critical subsystems made it possible to obtain many analytical results (e.g, new separation of variables \cite{KhSavUMBeng,KhRCD09}, explicit descriptions of periodic motions \cite{KhJPA,KhIISmir40}) and to fulfill a number of topological investigations \cite{KhSavUMBeng,KhRCD09,KhND11}. On this basis, the complete description of the topology of the Kowalevski top in a double field in terms of net topological invariants was given \cite{RyabKhMatsbEn,KhRyabDME}.

In 2002, new integrable problems for the Kowalevski type gyrostat were found by V.V.\,Sokolov for the case of one axially symmetric field \cite{SokVV} and by V.V.\,Sokolov and A.V.\,Tsiganov for the case of a double field \cite{SokTsiE}. The last one also generalizes the Kowalevski type integrable gyrostat found by A.G.\,Reyman and M.A.\,Semenov-Tian-Shansky \cite{ReySem}. The investigation of the phase topology of the Sokolov\,--\,Tsiganov case was started in \cite{RyabTMF}.

In the present paper we obtain the generalized Appelrot classes for the case found in \cite{SokTsiE} and establish some properties of the appearing invariant manifolds (critical subsystems with two degrees of freedom). For each subsystem we point out the invariant relations defining its manifold and give a pair of independent first integrals. In terms of these integrals, using the invariant relations we explicitly calculate the types of the corresponding critical points of the integral map. It gives an analytical basis for the global topological study of the whole system with three degrees of freedom.

\section{Preliminaries}\label{sec0}
Let $\mathbf{v}$ denote the Lie algebra $e(3,2)=so(3)\oplus(\bbR^3 \dot{+}\bbR^3)$, the semi-direct sum of $so(3)$ and two copies of $\bbR^3$. The Euler equations on the adjacent space $\mathbf{v}^*\cong\bbR^9(\mathbf{M},\bda,\bdb)$ are (see \cite{Bogo})
\begin{equation}\label{eq2_1}
\begin{array}{l}
\displaystyle \frac{\rmd {\mathbf{M}}}{\rmd t} =\mathbf{M} {\times} \frac{\partial H} {\partial \mathbf{M}}  +  \bda{\times} \frac{\partial H} {\partial \bda}+ \bdb {\times} \frac{\partial H} {\partial \bdb},\\[3mm]
\displaystyle \frac {\rmd \bda}{\rmd t} = \bda {\times} \frac{\partial H} {\partial \mathbf{M}}, \qquad \frac {\rmd \bdb}{\rmd t}=
\bdb {\times} \frac{\partial H} {\partial \mathbf{M}}.
\end{array}
\end{equation}
These equations have a Hamiltonian form (with the Hamilton function $H$) with respect to the Lie\,--\,Poisson brackets on $\mathbf{v}^*$, i.e., for any function $f$ on $\mathbf{v}^*$ we have ${\rmd f}/{\rmd t}=\{H,f\}$. If
\begin{equation}\label{eq2_2}
  H=H^{(2)}+H^{(1)}+H^{(0)},
\end{equation}
where $H^{(i)}$ is a homogeneous degree $i$ function of the components $M_1,M_2,M_3$, then equations \eqref{eq2_1} describe some (generalized) problem of the rigid body dynamics, in which $H^{(2)}$ stands for the kinetic energy, $H^{(0)}=H^{(0)}(\bda, \bdb)$ is the potential energy.

The vector $\bdo =\partial H / \partial {\mathbf{M}}$ is called the angular velocity. The term $H^{(1)}$ in \eqref{eq2_1} generates forces which are usually called gyroscopic. The canonical unit bases in $so(3)\cong \bbR^3 = \{\mathbf{M}\}$, $\bbR^3=\{\bda\}$ and $\bbR^3=\{\bdb\}$ are identified with some orthonormal basis $O\mathbf{e}_1\mathbf{e}_2\mathbf{e}_3$ fixed in the rotating body (the moving frame with the origin $O$ at the fixed point of the body).

For a generic case we suppose that $\bda{\times}\bdb \ne 0$. The phase space of equations \eqref{eq2_1} is a \mbox{6-di}mensional submanifold $\mP^6$ in $\mathbf{v}^*$ defined as a common level of the Casimir functions (geometric integrals)
\begin{equation*}
  \bda {\bdot} \bda = a^2, \quad \bdb {\bdot} \bdb = b^2, \quad \bda {\bdot} \bdb = c \qquad (a \geqslant b > 0, \; |c| < a b).
\end{equation*}
On $\mP^6$ there exists a natural symplectic structure compatible with the restriction of the Lie\,--\,Poisson brackets.

If $H^{(0)}$ is a linear function in $\bda, \bdb$ and essentially depends on both vectors, corresponding potential field is called a double field. In this case the vectors $\bda, \bdb$ are called the fields intensities and, obviously, are constant ``in space'' as shown by the second group of equations \eqref{eq2_1} known as the Poisson equations. For a double field we write
\begin{equation}\label{eq2_3}
  H^{(0)} = - \ve_0 (\mathbf{r}_1  {\bdot} \bda + \mathbf{r}_2  {\bdot} \bdb)+{\rm const}
\end{equation}
with constant vectors $\mathbf{r}_1,\mathbf{r}_2$. These vectors are pointing from $O$ to the centers of application of the two constant fields.  The redundant parameter $\ve_0 \geqslant 0$ allows the essential part of $H^{(0)}$ to vanish while $a^2+b^2 \ne 0$. In what follows, for the cases with $H^{(1)}$ independent of $\bda,\bdb$ and non-constant $H^{(0)}$ we always put $\ve_0 =1$.

If $\mathbf{r}_1 {\times} \mathbf{r}_2\ne 0$, then $\mathbf{r}_1, \mathbf{r}_2$ can be chosen to form an orthonormal pair and therefore can be taken as the first two vectors of the moving frame
\begin{equation}\label{eq2_4}
 \mathbf{r}_1= \mathbf{e}_1, \qquad \mathbf{r}_2=\mathbf{e}_2.
\end{equation}
Note that for a large class of Hamilton functions including all considered here, without loss of generality one can suppose that $\bda$ and $\bdb$ are orthogonal \cite{KhRCD,KhHMJ}.
This fact will be used later in the change of variables which simplifies calculations and formulas.

We say that the system \eqref{eq2_1} with the Hamilton function \eqref{eq2_2} has the Kowalevski configuration if in addition to \eqref{eq2_3}, \eqref{eq2_4} we have
\begin{equation*}
\begin{array}{l}
  H^{(2)}={\displaystyle \frac{1}{2}} \mathbf{g} \mathbf{M} {\bdot} \mathbf{M}, \qquad \mathbf{g}=\mathop{\rm diag}\nolimits \{\frac{1}{2},\frac{1}{2},1\}.
\end{array}
\end{equation*}
Here the constant multiplier of $H^{(2)}$ is due only to the choice of measure units. The essential condition is the ratio $1{:}1{:}2$ of the eigenvalues of the matrix $\mathbf{g}$ inverse to the inertia tensor of the body.

The classical Kowalevski case corresponds to $H^{(1)}\equiv 0, \bdb\equiv 0$:
\begin{equation}\label{eq2_5}
  H_{\rm class}=\omega_1^2+\omega_2^2 +\frac{1}{2}\omega_3^2-\alpha_1.
\end{equation}
The first integrals then are the area integral
\begin{equation}\label{eq2_6}
\begin{array}{l}
  L_{\rm class} = {\displaystyle \frac{1}{2}} \mathbf{M} {\bdot} \bda   = \alpha_1 \omega_1 + \alpha_2 \omega_2+ {\displaystyle \frac{1}{2}} \alpha_3 \omega_3\\
\end{array}
\end{equation}
and the famous Kowalevski integral
\begin{equation}\label{eq2_7}
\begin{array}{l}
  K_{\rm class} = (\omega_1^2-\omega_2^2+\alpha_1)^2+ (2\omega_1 \omega_2+\alpha_2)^2.
\end{array}
\end{equation}

For the general Kowalevski configuration under certain conditions the first integrals additional to $H$ were found in \cite{Bogo,Yeh,ReySem}. O.I.\,Bogoyavlensky \cite{Bogo} considered the case $H^{(1)}\equiv 0, \bdb \ne 0$ and found the first integral
\begin{equation*}
\begin{array}{l}
  K_{\rm B} = (\omega_1^2-\omega_2^2+\alpha_1-\beta_2)^2+ (2\omega_1 \omega_2+\alpha_2+\beta_1)^2.
\end{array}
\end{equation*}
Since for $\bdb \ne 0$ there is no area integral, Bogoyavlensky pointed out the special case $K_{\rm B}=0$ and proved the integrability of the system on this 4-dimensional submanifold in $\mP^6$. In more general case
\begin{equation}\label{eq2_8}
  H^{(1)} = -\lambda M_3, \qquad \bdb \ne 0 \qquad (\lambda={\rm const})
\end{equation}
H.M.\,Yehia \cite{Yeh} found the integral
\begin{equation*}
\begin{array}{l}
K_{\rm Y} = (\omega _1^2  - \omega _2^2  + \alpha _1  - \beta _2 )^2 +
(2\omega _1 \omega _2  + \alpha _2  + \beta _1 )^2 \\[2mm]
\phantom{K = (}+ 2\lambda[(\omega_3-\lambda) ( \omega_1^2+ \omega_2
^ 2)
+ 2 \omega_1 \alpha_3 + 2 \omega_2 \beta_3 ].
\end{array}
\end{equation*}
Here the constant vector of the gyrostatic momentum is ${\boldsymbol \lambda} = \lambda \mathbf{e}_3$. For simplicity, the term ``gyrostatic momentum'' in this case is used for the scalar parameter $\lambda$. If in addition we suppose $\bdb \equiv 0$, then there exists the general first integral
\begin{equation*}
\begin{array}{l}
  L_{\rm Y} =  \alpha_1 \omega_1 + \alpha_2 \omega_2+ {\displaystyle \frac{1}{2}} \alpha_3 (\omega_3+\lambda)
\end{array}
\end{equation*}
similar to \eqref{eq2_6} and the system becomes completely integrable. This case is known as the Kowaslevski\,--\,Yehia gyrostat.

In \cite{ReySem} a Lax representation of equations \eqref{eq2_1} in the case \eqref{eq2_8} was found that gave rise to the general first integral
\begin{equation*}
\begin{array}{l}
G_{\rm RS}= \left[\alpha_1 \omega_1+\alpha_2 \omega_2+\displaystyle \frac{1}{2}\alpha_3 (\omega_3+\lambda)\right]^2+ \left[\beta_1 \omega_1+\beta_2 \omega_2+\displaystyle \frac{1}{2}\beta_3 (\omega_3+\lambda)\right]^2 \\
\phantom{G_{\rm RS}} +(\omega_3-\lambda) \left[(\alpha_2 \beta_3-\alpha_3 \beta_2) \omega_1+(\alpha_3 \beta_1-\alpha_1 \beta_3) \omega_2+\displaystyle \frac{1}{2} (\alpha_1 \beta_2-\alpha_2 \beta_1) (\omega_3+\lambda)\right] \\
\phantom{G_{\rm RS}}  -\bdb^2 \alpha _1  - \bda^2 \beta_2 + \bda {\bdot} \bdb (\alpha_2+\beta_1).
\end{array}
\end{equation*}
Thus, the system \eqref{eq2_1} with $H=H_\lambda$, where
\begin{equation}\label{eq2_9}
  H_\lambda = \frac{1}{4}(M_1^2+M_2^2)+\frac{1}{2}M_3^2 - \lambda M_3 -\alpha_1-\beta_2 + \frac{\lambda^2}{2},
\end{equation}
having three first integrals $H_\lambda, K_{\rm Y}, G_{\rm RS}$ in involution is a Liouville integrable system with three degrees of freedom. It is known as the Kowalevski \textit{gyrostat} in a double field. The relation between $\mathbf{M}$ and $\bdo$ then is
\begin{equation*}
  M_1=2 \omega_1, \quad M_2=2 \omega_2, \quad M_3=\omega_3+\lambda.
\end{equation*}
The constant term $\frac{\lambda^2}{2}$ in $H_\lambda^{(0)}$ is introduced to make $H_\lambda$ the total energy of the gyrostat. If $\lambda=0$ the system is called the Kowalevski \textit{top} in a double field. Obviously, for $\bdb=0$ we have $G_{\rm RS}=L_{\rm Y}^2$, so the new integral generalizes the area type integral for a double field.

The Kowalevski top and the Kowalevski gyrostat in a double field up to this moment have not been reduced to quadratures. However, the phase topology of the Kowalevski top in a double field is completely investigated. The collection of results and the relevant references can be found in \cite{RyabKhMatsbEn, KhRyabDME}. The Kowalevski gyrostat in a double field is also actively studied \cite{KhHMJ,KhIISmir40}. These investigations are based on the notion of a critical subsystem, i.e., an even-dimensional invariant submanifold in the phase space consisting of critical points of the integral map. The idea of a critical subsystem is as follows \cite{KhRCD, KhHMJ}. Let
\begin{equation*}
  J=H{\times}K{\times}G: \mP^6 \to \bbR^3
\end{equation*}
be the integral map of the system \eqref{eq2_1} and let $\mathcal{C}$ be the set of its critical points. Obviously, $\mathcal{C}$ is stratified by $\rk J$ and for $x\in \mathcal{C}$ the number $\rk J(x)$ is called the rank of the critical point $x$. The set $J(\mathcal{C})$ is called the bifurcation diagram of $J$ and is also stratified by the maximal rank on the pre-images of its points. Smooth \mbox{2-d}imensional surfaces or isolated \mbox{1-d}imensional curves in $J(\mathcal{C})$ are called the leaves of the bifurcation diagram.
Suppose that
\begin{equation}\label{eq2_10}
\mL(h,k,g)=0
\end{equation}
is the equation of a 2-surface $\Pi_\mL\subset \bbR^3$ bearing one of the \mbox{2-d}imensional leaves. Let
\begin{equation}\label{eq2_11}
\Phi_\mL=\mL \circ J: \mP^6 \to \bbR.
\end{equation}
Then the critical subsystem $\mathcal{J}_\mL$ (of rank 2) is the closure of the set of critical points of rank~2 belonging to the level $\Phi_\mL=0$ (to the pre-image of $\Pi_\mL$). Obviously, $\mathcal{J}_\mL$ is an invariant subset in $\mP^6$ consisting of critical points of $J$. Therefore, speaking of a critical subsystem we always have in mind the induced dynamics. The subset $\mathcal{J}_\mL$ is almost everywhere a 4-dimensional manifold given by the equations
\begin{equation}\label{eq2_12}
\Phi_\mL=0, \qquad \rmd \Phi_\mL =0
\end{equation}
and the induced dynamical system on it is almost everywhere Hamiltonian with two degrees of freedom. The global Hamiltonian structure may have singularities at the points where either $\mathcal{J}_\mL$ fails to be smooth or the 2-form induced on $\mathcal{J}_\mL$ by the symplectic structure degenerates. Critical subsystems of rank 1 (with one degree of freedom) can be obtained in a similar way starting with two equations of the type~\eqref{eq2_10}.

In analytical case (particularly, in the case when all functions and equations have polynomial structure) critical subsystems are defined uniquely if we consider all irreducible equations of the type \eqref{eq2_10}. Such systems generate a stratification of the phase space and determine bifurcations that occur to the Liouville tori. Invariant relations describing a critical subsystem make it possible to explicitly calculate the types of critical points for all ranks.

Critical subsystems were found for the Kowalevski top in a double field in \cite{Bogo,Kh32,KhRCD} and for the Kowalevski gyrostat in a double field in \cite{KhJPA, KhHMJ}. In both cases, one of these subsystems has one degree of freedom and is formed completely by the pendulum type motions.
For the top, there exist three subsystems with two degrees of freedom. For the gyrostat, one of the subsystems (the Bogoyavlensky case) splits, the parts of the new manifold join the other subsystems. In two of three critical subsystems for the top in a double field the algebraic separation of variables was obtained \cite{KhSavUMBeng,KhRCD09}. This turns the topological investigation into a strictly analytical procedure~\cite{KhND11}.

Still, further generalization of the problem \eqref{eq2_9} was found in \cite{SokTsiE}. To describe this case take the system with the Kowalevski configuration and put
\begin{equation*}
  \displaystyle H^{(0)} = - \ve_0 ({\mathbf{e}}_1  {\bdot} \bda + {\mathbf{e}}_2  {\bdot} \bdb)+\frac{\lambda^2}{2}, \quad   H^{(1)} = - \lambda M_3 - \ve_1  \mathbf{M} {\bdot} ({\mathbf{e}}_1  {\times} \bda + {\mathbf{e}}_2  {\times} \bdb).
\end{equation*}
The full Hamilton function takes the form
\begin{equation}\label{eq2_13}
\begin{array}{l}
  \displaystyle H = \frac{1}{4}(M_1^2+M_2^2)+\frac{1}{2}(M_3-\lambda)^2 \\[3mm]
 \displaystyle \quad - \ve_1  [ (\alpha_2 M_3 - \alpha_3 M_2 ) + (\beta_3 M_1 - \beta_1 M_3 )] -\ve_0 (\alpha_1+\beta_2).
\end{array}
\end{equation}
For this function, the Lax representation with a spectral parameter was given in \cite{SokTsiE} proving the complete Liouville integrability.
The authors of \cite{SokTsiE} called their case the generalized two fields gyrostat. In what follows we call the system \eqref{eq2_1} with the Hamilton function \eqref{eq2_13} the GTFG  system. P.E.\,Ryabov in \cite{RyabTMF} started the topological analysis of the GTFG system. First of all, he obtains the explicit formulas of the commutating integrals $K$ and $G$ generalizing $K_{\rm Y}$ and $G_{\rm RS}$, thus expressing in terms of these integrals the coefficients of the algebraic curve associated with the Lax representation:
\begin{equation*}
\begin{array}{l}
\displaystyle K= \left[\frac{1}{4}(M_1^2-M_2^2)+\ve_0 (\alpha_1-\beta_2)+\ve_1 [(\bda{\times}{\mathbf{M}}){\bdot}\mathbf{e}_1-(\bdb{\times}{\mathbf{M}}){\bdot}\mathbf{e}_2-\ve_1 (\bda^2-\bdb^2)]     \right]^2\\[3mm]
\displaystyle \quad +\left[\frac{1}{2}M_1 M_2+\ve_0 (\alpha_2+\beta_1)+\ve_1 [(\bda{\times}{\mathbf{M}}){\bdot}\mathbf{e}_2+(\bdb{\times}{\mathbf{M}}){\bdot}\mathbf{e}_1-2\ve_1  \bda {\bdot}\bdb]    \right]^2\\[3mm]
\displaystyle \quad +2\lambda \left[\frac{1}{4}(M_1^2+M_2^2)(M_3-2\lambda)+\ve_0 (\alpha_3 M_1+\beta_3 M_2) \right] \\[3mm]
\displaystyle \quad -2 \ve_1  \lambda \left[ \alpha_2 M_1^2-\beta_1 M_2^2-(\alpha_1-\beta_2)M_1 M_2+\ve_1 (\bda^2+\bdb^2)-2 \ve_1 (\bda{\times}\bdb){\bdot} \mathbf{M}\right], \\[3mm]
\displaystyle G= \frac{1}{4}\left[(\mathbf{M} {\bdot}\bda)^2+ (\mathbf{M} {\bdot}\bdb)^2\right]+\frac{1}{2}(M_3-2\lambda)(\bda{\times}\bdb) {\bdot}\mathbf{M} \\[3mm]
\displaystyle \quad +\ve_0 \left[\bdb^2\alpha_1 + \bda^2 \beta_2 -(\bda{\bdot}\bdb)(\alpha_2+\beta_1)\right]\\[3mm]
\displaystyle \quad -\ve_1  \left[\bdb^2 (\bda{\times}\mathbf{M}){\bdot}\mathbf{e}_1+ \bda^2 (\bdb{\times}\mathbf{M}){\bdot}\mathbf{e}_2-(\bda{\bdot}\bdb)[(\bda{\times}\mathbf{M}){\bdot}\mathbf{e}_2+(\bdb{\times}\mathbf{M}){\bdot}\mathbf{e}_1 ]               \right]  \\[3mm]
\displaystyle \quad = \frac{1}{4}[(\mathbf{M} {\bdot}\bda)^2+ (\mathbf{M} {\bdot}\bdb)^2]+\frac{1}{2}(M_3-2\lambda)(\bda{\times}\bdb) {\bdot}\mathbf{M} \\[3mm]
\displaystyle \quad +\ve_0 (\bda{\times}\bdb){\bdot} \left[\mathbf{e}_2{\times}\bda+\mathbf{e}_1{\times}\bdb \right]-\ve_1  \mathbf{M} {\bdot} \left[\mathbf{e}_1{\times}\bigl(\bdb{\times}(\bda{\times}\bdb)\bigr)+ \mathbf{e}_2{\times}\bigl(\bda{\times}(\bda{\times}\bdb)\bigr) \right].
\end{array}
\end{equation*}

Next, in the work \cite{RyabTMF} for the case $\lambda=0$ four critical subsystems of rank 2 are pointed out. For these subsystems, the constraints on the integral constants are derived from the condition that the algebraic curve of the Lax pair has a singular point. For two subsystems the invariant relations in the phase space are found. Another two are defined implicitly by the equations of the type \eqref{eq2_10}. This form of description does not provide a clear way to calculate the types of corresponding critical points. In what follows, we give explicit equations of three critical subsystems generalizing the Appelrot classes of motion.

To present formulas in a shorter form, let us introduce the linear change of variables ($\rmi^2=-1$)
\begin{equation}\label{eq2_14}
\begin{array}{c}
\begin{array}{ll}
x_1 = (\alpha_1  - \beta_2) + \rmi (\alpha_2  + \beta_1),&
x_2 = (\alpha_1  - \beta_2) - \rmi(\alpha_2  + \beta_1 ), \\
y_1 = (\alpha_1  + \beta_2) + \rmi(\alpha_2  - \beta_1), & y_2 =
(\alpha_1  + \beta_2) -
\rmi(\alpha_2  - \beta_1), \\
z_1 = \alpha_3  + \rmi\beta_3, &
z_2 = \alpha_3  - \rmi\beta_3,
\end{array}\\
w_1 = \frac{1}{2}(M_1  + \rmi M_2) , \quad w_2 = \frac{1}{2}(M_1  - \rmi M_2), \quad w_3 = M_3-\lambda.
\end{array}
\end{equation}

For convenience, we use the Poisson brackets obtained from the initial Lie\,--\,Poisson brackets dividing by $\rmi$. Then for the phase vector $\mathbf{u} = (w_1,w_2,w_3,x_1,x_2,y_1,y_2,z_1,z_2)$
the matrix of the basic brackets $\{u_i,u_j\}$ is
\begin{equation*}
   \left(\begin{array}{ccccccccc}
  0 &  -\frac{1}{2}(w_3 + \lambda) &  w_1 &  0 &  -z_2 &  0 &  -z_1 &  \frac{1}{2}x_1 &  \frac{1}{2}y_1\\
  \frac{1}{2}(w_3 + \lambda) &  0 &  -w_2 &  z_1 &  0 &  z_2 &  0 &  -\frac{1}{2}y_2 &  -\frac{1}{2}x_2\\
  -w_1 &   w_2 &  0 &  -x_1 &  x_2 &  -y_1 &  y_2 &  0 &  0\\
  0 &  -z_1 &  x_1 &  0 &  0 &  0 &  0 &  0 &  0\\
  z_2 &  0 &  -x_2 &  0 &  0 &  0 &  0 &  0 &  0\\
  0 &  -z_2 &  y_1 &  0 &  0 &  0 &  0 &  0 &  0\\
  z_1 &  0 &  -y_2 &  0 &  0 &  0 &  0 &  0 &  0\\
  -\frac{1}{2}x_1 &  \frac{1}{2}y_2 &  0 &  0 &  0 &  0 &  0 &  0 & 0\\
  -\frac{1}{2}y_1 &  \frac{1}{2}x_2 &  0 &  0 &  0 &  0 &  0 &  0 &  0
  \end{array}\right).
\end{equation*}
Using variables \eqref{eq2_14} we suppose that the intensities of the forces are made orthogonal according to the procedure described in \cite{KhRCD,KhHMJ}.
It is known that if two orthogonal fields $\bda,\bdb$ ($\bda {\bdot} \bdb = 0$) have equal intensities $a=b$, then there exists a symmetry group and the system can be reduced to two degrees of freedom \cite{Yeh,KhIISavJPA}. Therefore, for the general case we take $a \ne b$ and choose the notation in such a way that $a > b > 0$. Then the equations of the phase space $\mP^6$ (the geometric integrals) take the form
\begin{equation}\label{eq2_15}
  z_1^2+x_1 y_2 =r^2, \quad z_2^2+x_2 y_1 =r^2, \quad x_1 x_2 + y_1 y_2 +2 z_1 z_2 = 2 p^2,
\end{equation}
where $p^2=a^2+b^2$ and $r^2=a^2-b^2$ ($p>r>0$). Since the components of $\mathbf{M}, \bda, \bdb$ are real, equations \eqref{eq2_14} define in $\mathbb{C}^9$ the subspace $V^9$. The latter is \mbox{9-dim}ensional over $\bbR$. In turn, the phase space $\mP^6$ is defined in $V^9$ by equations \eqref{eq2_15}.

Let $f$ be an arbitrary function on $V^9$. For brevity, the term ``critical point of $f$'' will always mean a critical point of the
restriction of $f$ to $\mP^6$. Similarly, $\rmd f$ means the restriction of the differential of $f$ to the set of vectors tangent to $\mP^6$.

While calculating critical points of various functions (in the above sense), it is convenient to avoid introducing Lagrange multipliers
for the restrictions \eqref{eq2_15}. Notice that the following vector fields
\begin{equation*}
\begin{array}{l}
X_1=\partial_{w_1}, \; X_2=\partial_{w_2},\;X_3=\partial_{w_3}, \\
Y_1= z_2 \partial_{x_2} + z_1 \partial_{y_2} -\frac{1}{2} x_1 \partial_{z_1} - \frac{1}{2} y_1 \partial_{z_2},\\
Y_2= z_1 \partial_{x_1} + z_2 \partial_{y_1}  - \frac{1}{2} y_2 \partial_{z_1}-\frac{1}{2} x_2 \partial_{z_2},\\
Y_3= x_1 \partial_{x_1} - x_2 \partial_{x_2} + y_1 \partial_{y_1} - y_2 \partial_{y_2}
\end{array}
\end{equation*}
are tangent to $\mP^6 \subset V^9$ and linearly independent at any point of $\mP^6$. Then the set of critical points of $f$ is defined by
the system of equations
\begin{equation}\label{eq2_16}
\begin{array}{lll}
X_1 f=0, & X_2 f=0, & X_3 f=0, \\
Y_1 f=0, & Y_2 f=0, & Y_3 f=0.
\end{array}
\end{equation}
This fact will be used below.

\section{The Appelrot classes}\label{sec2}
Consider the classical Kowalevski problem. According to the separated equations obtained by S.\,Kowalevski \cite{Kowa} and the constant shift in the variables introduced later by F.\,K\"{o}tter \cite{Kott}, the especially remarkable motions in Appelrot's definition \cite{Appel} are possible only if the polynomial in $s$
\begin{equation*}
S_{\rm Kow} (s)=[(s-h)^2 - k]\{s[(s-h)^2+a^2-k]-2\ell^2\}
\end{equation*}
has a multiple root. Here $h,\ell,k$ stand for the constants of the integrals \eqref{eq2_5} -- \eqref{eq2_7} respectively.
Appelrot points out the following classes of such motions.

The 1st class is characterized by the relation
\begin{equation}\label{eq3_1}
  k=0.
\end{equation}
It immediately follows from \eqref{eq2_7} that the corresponding invariant manifold is defined in the reduced phase space $\mP^5=\bbR^3(\bdo){\times}S^2(\bda)$ by the following two equations
\begin{equation}\label{eq3_2}
  \omega_1^2-\omega_2^2+\alpha_1=0, \qquad 2\omega_1 \omega_2+\alpha_2 =0 .
\end{equation}

The 2nd and the 3rd classes are generated by the relation
\begin{equation}\label{eq3_3}
  (2\ell^2-a^2h)^2-a^4 k =0.
\end{equation}
The difference between two classes is in the sign of $2\ell^2-a^2h$. This value is positive in the 2nd class and negative in the 3rd one. Zero value, obviously, gives a subset of motions from the 1st class treated usually as common motions for all three classes. In $\mP^5$ we have a connected manifold defined by two invariant relations (e.g. \cite{Kh32})
\begin{equation}\label{eq3_4}
\begin{array}{c}
\displaystyle{(\alpha _1^2  + \alpha _2^2 )\omega _3 - 2(\alpha _1
\omega _1  + \alpha _2 \omega _2 )\alpha _3  =
0,} \\
\displaystyle{2\alpha _1 \alpha _2 (\omega _1^2  - \omega _2^2  +
\alpha _1 ) - (\alpha _1^2  - \alpha _2^2 )(2\omega _1 \omega _2  +
\alpha _2 ) = 0}.
\end{array}
\end{equation}

The 4th Appelrot class is generated by the zero discriminant of the last multiplier in~$S_{\rm Kow}$
\begin{equation*}
  27\ell^4-2h(9a^2+h^2-9k)\ell^2+(a^2-k)(a^2+h^2-k)=0.
\end{equation*}
It is more convenient to write the equations of this surface taking $h$ and $s$ for the parameters:
\begin{equation}\label{eq3_5}
k=a^2+(h-s)(h-3s), \qquad \ell^2=s^2(h-s).
\end{equation}
Appelrot shows that all especially remarkable motions of the 4th class satisfy the following conditions
\begin{equation}\label{eq3_6}
s = -\frac{\ell}{\omega_1}={\rm const}, \qquad \omega_2=0.
\end{equation}
Since $\omega_2=0$ yields $\dot \omega_1=0$, to obtain the system of invariant relations we replace the first condition in \eqref{eq3_6} with $\dot \omega_2=0$. Thus, the equations of the corresponding manifold in $\mP^5$ are
\begin{equation}\label{eq3_7}
  \omega_1 \omega_3+\alpha_2 =0, \qquad \omega_2=0.
\end{equation}

In the following sections we point out the equations of three invariant manifolds generalizing the Appelrot classes for the Hamilton function \eqref{eq2_13}.
Two of them exist only for the top with $\lambda=0$ and correspond to the bifurcation surfaces found in \cite{RyabTMF}, the third one exists for arbitrary values of $\lambda$ and generalizes the cases shown in the works~\cite{KhRCD09,KhJPA,KhHMJ}.

\section{The extension of the 1st Appelrot class}\label{sec3}
It is known that the 1st Appelrot class is not stable with respect to the gyrostatic momentum. Indeed, already for the case of a single field (the gravity field) the corresponding invariant manifold disappears for the Kowalevski\,--\,Yehia gyrostat with $\lambda \ne 0$ (see \cite{Ryab1}). To illustrate this fact, put $\bdb=0, \ve_1 =0$ and consider the image of the first three Appelrot classes in the space of the integral constants $\bbR^3(\ell,h,k)$. It is a part of the bifurcation diagram of the corresponding integral map. The cross-section of this set by the plane $\ell={\rm const}$ is shown in Figure~\ref{fig1}. We see the reconstruction of the Appelrot classes to form new critical subsystems. Therefore, in this section for the general case \eqref{eq2_13} we suppose that $\lambda=0$.

\begin{figure}[!ht]
\centering
\includegraphics[width=0.7\textwidth, keepaspectratio = true]{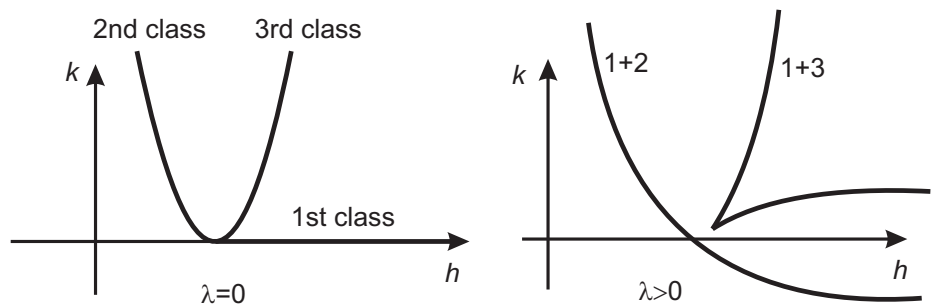}
\caption{Splitting of the Appelrot classes.}\label{fig1}
\end{figure}

\begin{theorem}[\cite{RyabTMF}]\label{theo1}
The GTFG  system with $\lambda=0$ has the invariant manifold $\mm_1$ defined by the system of invariant relations
\begin{equation}\label{eq4_1}
  Z_1=0, \qquad Z_2=0,
\end{equation}
where
\begin{equation*}
\begin{array}{l}
  Z_1= \frac{1}{4}(M_1^2-M_2^2)+\ve_0 (\alpha_1-\beta_2)+\ve_1 [(\bda{\times}{\mathbf{M}}){\bdot}\mathbf{e}_1-(\bdb{\times}{\mathbf{M}}){\bdot}\mathbf{e}_2] \\[3mm]
  \phantom{ Z_1} - \ve_1 ^2(\bda^2-\bdb^2),     \\[3mm]
  Z_2=\frac{1}{2}M_1 M_2+\ve_0 (\alpha_2+\beta_1)+\ve_1 [(\bda{\times}{\mathbf{M}}){\bdot}\mathbf{e}_2+(\bdb{\times}{\mathbf{M}}){\bdot}\mathbf{e}_1-2\ve_1  \bda {\bdot}\bdb].
\end{array}
\end{equation*}
The induced system on $\mm_1$ is the critical subsystem generated by the conditions $\eqref{eq2_12}$ with
\begin{equation}\label{eq4_2}
  \mL=\mL_1= k.
\end{equation}
\end{theorem}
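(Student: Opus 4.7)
My plan is to exploit the explicit formula for $K$ given in Section~\ref{sec0}: when $\lambda = 0$, both summands carrying the factor $\lambda$ vanish and what remains is literally the sum of the squares of the two bracketed expressions, which are exactly $Z_1$ and $Z_2$. So on $\mP^6$ with $\lambda = 0$ one has the algebraic identity
\[
K = Z_1^2 + Z_2^2.
\]
Every claim of the theorem follows quickly from this identity.

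First I would verify invariance of $\mm_1$. Since $K$ is a first integral of \eqref{eq2_1} with $H$ given by \eqref{eq2_13}, the level $\{K = 0\}$ is flow-invariant. On $\mP^6$ the components of $\mathbf{M}, \bda, \bdb$ are real, hence $Z_1, Z_2$ are real-valued and $Z_1^2 + Z_2^2 = 0$ is equivalent to the pair $Z_1 = 0$, $Z_2 = 0$. Therefore $\mm_1 = \{K = 0\}$ as subsets of $\mP^6$, and invariance is immediate.

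Next I would identify $\mm_1$ with the critical subsystem associated to $\mL_1 = k$. By \eqref{eq2_11}, $\Phi_{\mL_1} = \mL_1 \circ J = K$, so according to \eqref{eq2_12} the critical subsystem is cut out in $\mP^6$ by $K = 0$ together with $\rmd K = 0$. The first condition gives $\mm_1$ by the preceding paragraph. For the second, differentiating the identity above yields
\[
\rmd K = 2 Z_1\, \rmd Z_1 + 2 Z_2\, \rmd Z_2,
\]
which vanishes identically on $\mm_1$. Consequently $\mm_1 \subseteq \{K = 0,\ \rmd K = 0\} \subseteq \{K = 0\} = \mm_1$, so the two sets coincide.

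The only non-routine ingredient is recognizing the clean decomposition $K = Z_1^2 + Z_2^2$ at $\lambda = 0$; once this is in place everything else is purely formal. One small additional check, which I expect to pose no real difficulty, is that rank $J = 2$ holds on a dense open subset of $\mm_1$: the relation $\rmd K = 0$ reduces the rank by one while $\rmd H$ and $\rmd G$ generically stay independent, so $\mm_1$ qualifies as a genuine rank-2 critical subsystem in the sense of the definition preceding \eqref{eq2_12}.
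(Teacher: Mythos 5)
Your proposal is correct and follows essentially the same route as the paper, which proves the theorem in one line by citing the identity $K = Z_1^2 + Z_2^2$ (equation \eqref{eq4_3}) valid for $\lambda=0$; you have merely filled in the routine consequences (realness of $Z_1,Z_2$ on $\mP^6$, flow-invariance of $\{K=0\}$, and $\rmd K = 2Z_1\,\rmd Z_1 + 2Z_2\,\rmd Z_2$ vanishing on $\mm_1$). No discrepancies to report.
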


The proof follows immediately from the fact that in this case
\begin{equation}\label{eq4_3}
  K=Z_1^2+Z_2^2.
\end{equation}
Obviously, for the classical Kowalevski case we come to equations \eqref{eq3_1}, \eqref{eq3_2}, thus obtaining the generalization of the first Appelrot class of motions. If $\bdb\ne 0$, putting $\ve_0 =1, \ve_1 =0$ we come to the partial integrable case found by Bogoyavlensky~\cite{Bogo}.

Let us point out some features of the critical subsystem $\mm_1$.

\begin{prop}\label{prop1}
The set $\mm_1$ is a smooth 4-dimensional manifold except for the values of the parameters satisfying one of the relations
\begin{equation}\label{eq4_4}
  \ve_0 =(a \pm b) \ve_1 ^2,
\end{equation}
when $\mm_1$ fails to be smooth in the neighborhood of the following equilibria
\begin{equation}\label{eq4_5}
   \alpha_1 = a, \quad \alpha_2=\alpha_3=0, \quad \beta_1=\beta_3=0, \quad \beta_2 = \pm b,\quad M_1=M_2=M_3=0.
\end{equation}
\end{prop}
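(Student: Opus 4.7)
Since $\mm_1 \subset \mP^6$ is cut out by the two equations $Z_1 = 0$ and $Z_2 = 0$, smoothness of $\mm_1$ as a 4-manifold at a point $x$ is equivalent to the linear independence of the restrictions of $\rmd Z_1$ and $\rmd Z_2$ to $T_x \mP^6$. In view of the identity \eqref{eq4_3}, dependence is equivalent to $x$ being a critical point on $\mP^6$ of some nonzero linear combination $\Phi_\mu = \mu_1 Z_1 + \mu_2 Z_2$. I would therefore apply the system \eqref{eq2_16} to $f = \Phi_\mu$, together with the relations $Z_1 = Z_2 = 0$ and the Casimir relations \eqref{eq2_15}, and classify the singular points as the parameters $\ve_0, \ve_1, a, b$ vary.

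Working in the complex variables \eqref{eq2_14} substantially shortens the expressions, since $(Z_1, Z_2)$ assemble into $Z_1 \pm \rmi Z_2$, compact polynomials in $w_j, x_j, y_j, z_j$. The first subtask is to analyze the three equations $X_j \Phi_\mu = 0$: these are linear in $(\mu_1, \mu_2)$ and polynomial in the $w_j$, and together with $Z_1 = Z_2 = 0$ they should pin down $\mathbf{M} = 0$ at any singular point (with a short separate discussion for the degenerate values $\ve_1 = 0$). Under $\mathbf{M} = 0$, the remaining three equations $Y_j \Phi_\mu = 0$, combined with $Z_1 = Z_2 = 0$ and the Casimirs \eqref{eq2_15}, become a purely algebraic system in $x_j, y_j, z_j$ and the projective unknown $(\mu_1 : \mu_2)$. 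A short case analysis should isolate the configurations with $z_1 = z_2 = 0$ and $\bda$ parallel to $\mathbf{e}_1$, $\bdb$ parallel to $\mathbf{e}_2$ as the only solutions, which after sign normalization are precisely the equilibria \eqref{eq4_5}.

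Substituting \eqref{eq4_5} into $Z_1 = 0$ then yields $\ve_0 (a \mp b) = \ve_1^2 (a - b)(a + b)$, which rearranges to \eqref{eq4_4}. The converse direction, that at these equilibria $\rmd Z_1$ and $\rmd Z_2$ are genuinely dependent on $T\mP^6$ exactly when \eqref{eq4_4} holds, follows from a direct evaluation of the quantities $X_j Z_i$ and $Y_j Z_i$ at the point. The fact that \eqref{eq4_5} are bona fide equilibria of the flow \eqref{eq2_1} with Hamiltonian \eqref{eq2_13} is automatic: with $\mathbf{M} = 0$ and $\bda, \bdb$ aligned with the force axes $\mathbf{e}_1, \mathbf{e}_2$, every right-hand side in \eqref{eq2_1} vanishes.

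The principal obstacle is the algebraic bookkeeping: six polynomial equations in nine phase coordinates, three Casimir constraints and the projective unknown $(\mu_1 : \mu_2)$. The decisive simplifications are the complex factorization $Z_1 \pm \rmi Z_2$, which effectively halves the size of the system at each stage, and the strict inequality $a > b > 0$ built into the setup, which prevents spurious cancellations when dividing out the factor $(a \pm b)$ at the final step; a secondary care point is to keep track of the cases $\ve_1 = 0$ and $\ve_0 = 0$ separately, since the argument that forces $\mathbf{M} = 0$ in step one relies on the gyroscopic coupling supplied by $\ve_1$.
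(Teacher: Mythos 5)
Your plan coincides with the paper's own proof: the paper likewise passes to the complexified pair $\xi_1,\xi_2$ of \eqref{eq4_6} (essentially $Z_1\mp\rmi Z_2$), applies the criterion \eqref{eq2_16} to an undetermined real combination $f=\mu_1\xi_1+\mu_2\xi_2$ with $\mu_2=\overline{\mu_1}$, and finds that the only solution is $w_1=w_2=w_3=0$, $z_1=z_2=0$, $x_1=x_2=\ve_1^2r^2/\ve_0$, $y_1=y_2=\ve_0/\ve_1^2$, which together with the Casimirs \eqref{eq2_15} forces \eqref{eq4_4} and yields the equilibria \eqref{eq4_5}. The only differences are presentational (your explicit split into the $X_j$- and $Y_j$-equations and the separate bookkeeping for $\ve_1=0$, $\ve_0=0$), so the proposal is correct and essentially identical in method.
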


\begin{proof}
In variables \eqref{eq2_14} we have $K=\xi_1 \xi_2$, where
\begin{equation}\label{eq4_6}
  \begin{array}{l}
    \xi_1=w_1^2 + \ve_0 x_1 - \rmi \ve_1 (x_1 w_3 - 2 z_1 w_1) - \ve_1 ^2 r^2, \\
    \xi_2=w_2^2 + \ve_0 x_2 + \rmi \ve_1 (x_2 w_3 - 2 z_2 w_2) - \ve_1 ^2 r^2.
  \end{array}
\end{equation}
The system \eqref{eq4_1} is equivalent to
\begin{equation}\label{eq4_7}
  \xi_1=0, \qquad \xi_2=0.
\end{equation}
To find possible singularities of $\mm_1$ we apply equations \eqref{eq2_16} to the function with undefined multipliers $f=\mu_1 \xi_1+\mu_2 \xi_2$ ($\mu_2=\overline{\mu_1}$, $\mu_1 \mu_2 \ne 0$).
The only solution is
\begin{equation*}
w_1=w_2=w_3=0, \quad x_1=x_2=\frac{\ve_1 ^2 r^2}{\ve_0 }, \quad y_1=y_2=\frac{\ve_0 }{\ve_1 ^2}, \quad z_1=z_2=0,
\end{equation*}
and equations \eqref{eq2_15} reduce to \eqref{eq4_4}. According to the choice of the sign in \eqref{eq4_4} we come to the corresponding point \eqref{eq4_5}.
\end{proof}

It follows from \eqref{eq4_3} that the Poisson bracket of $Z_1$ and $Z_2$ is a partial integral on $\mm_1$. In the classical case such integral was found by D.N.\,Goryachev, and in the Bogoyavlensky case it has the form \cite{Bogo}
\begin{equation*}
  F_{\rm B}= \frac{1}{4} (M_1^2 + M_2^2) M_ 3+ \alpha_3 M_1 + \beta_3 M_2.
\end{equation*}

Recall one general fact from the symplectic geometry.
\begin{lemma}\label{lem1}
Suppose a submanifold $\mm$ of a symplectic manifold $\mathcal{N}$ is defined by the system of independent equations
\begin{equation}\label{eq4_8}
  f_1=0, \qquad f_2=0.
\end{equation}
Then the 2-form on $\mm$ induced by the symplectic structure of $\mathcal{N}$ degenerates exactly on the set $\{f_1,f_2\}=0$.
\end{lemma}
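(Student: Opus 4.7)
The plan is to pass to pointwise linear symplectic algebra on the ambient manifold and translate the degeneracy of the induced form into a statement about the symplectic orthogonal of $T_p\mm$ inside $T_p\mathcal{N}$. Fix $p\in\mm$ and let $\omega$ be the symplectic form on $\mathcal{N}$. Since the equations \eqref{eq4_8} are independent, $df_1(p),df_2(p)$ are linearly independent, and $T_p\mm = \ker df_1(p)\cap\ker df_2(p)$ is a codimension-$2$ subspace of $T_p\mathcal{N}$.

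First, I would identify the symplectic orthogonal $(T_p\mm)^\omega$. Using the Hamiltonian vector fields $X_{f_1},X_{f_2}$ defined by $\omega(X_{f_i},\cdot)=df_i(\cdot)$, the very definition of $T_p\mm$ reads $v\in T_p\mm \Leftrightarrow \omega(X_{f_i},v)=0$ for $i=1,2$. Hence $(T_p\mm)^\omega = \mathrm{span}(X_{f_1}(p),X_{f_2}(p))$, and the independence of $df_1,df_2$ (combined with the isomorphism $T_p\mathcal{N}\to T_p^*\mathcal{N}$ provided by $\omega$) ensures that $X_{f_1}(p),X_{f_2}(p)$ are linearly independent, so $(T_p\mm)^\omega$ is exactly $2$-dimensional.

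Next, I would invoke the standard fact that the restricted form $\omega|_\mm$ is degenerate at $p$ precisely when $T_p\mm\cap(T_p\mm)^\omega\ne 0$, because the kernel of $\omega|_{T_p\mm}$ equals this intersection. So I need to determine when a nontrivial combination $v=aX_{f_1}(p)+bX_{f_2}(p)$ lies in $T_p\mm$. Applying $df_1$ and $df_2$ and using $df_i(X_{f_j})=\{f_i,f_j\}$, the membership conditions reduce to
\begin{equation*}
b\{f_1,f_2\}(p)=0,\qquad -a\{f_1,f_2\}(p)=0.
\end{equation*}
If $\{f_1,f_2\}(p)\ne 0$ these force $a=b=0$, hence $\omega|_\mm$ is non-degenerate; if $\{f_1,f_2\}(p)=0$ every $(a,b)$ works, so $(T_p\mm)^\omega\subset T_p\mm$ and the induced form has a $2$-dimensional kernel at $p$. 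This is exactly the statement of the lemma.

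There is no serious obstacle — the argument is pure linear symplectic algebra fiberwise. The only thing to be careful about is the sign/convention for Hamiltonian vector fields and the Poisson bracket, so I would fix the convention $\omega(X_{f_i},\cdot)=df_i$ and $\{f_i,f_j\}=\omega(X_{f_j},X_{f_i})$ (or its mirror) at the outset and use it consistently; the skew-symmetry then guarantees the two membership conditions above are equivalent, which makes the "iff" clean.
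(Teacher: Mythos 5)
Your argument is correct. Note that the paper itself offers no proof of this lemma --- it is stated as a recalled standard fact from symplectic geometry --- and what you have written is precisely the standard fiberwise linear-algebra argument that justifies it: $T_p\mm$ is the symplectic orthogonal of $\mathrm{span}(X_{f_1},X_{f_2})$, so by the involutivity of the symplectic orthogonal the kernel of the induced form, $T_p\mm\cap(T_p\mm)^\omega$, is nontrivial exactly when some combination $aX_{f_1}+bX_{f_2}$ is annihilated by both $df_1$ and $df_2$, which reduces to $\{f_1,f_2\}(p)=0$. The only point worth making explicit in a write-up is the identity $\bigl(W^{\omega}\bigr)^{\omega}=W$ used to pass from $T_p\mm=\{X_{f_1},X_{f_2}\}^{\omega}$ to $(T_p\mm)^{\omega}=\mathrm{span}(X_{f_1},X_{f_2})$; with that stated, the proof is complete and consistent with how the lemma is applied throughout the paper (degeneration precisely on the zero set of the bracket of the two invariant relations).
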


Since critical subsystems are usually described by the systems of the type \eqref{eq4_8}, the induced symplectic structure may degenerate on a set of codimension 1. In this case the
subsystem is said to be almost Hamiltonian.

The next statement follows from the results of \cite{RyabTMF} with the above agreement on the Poisson brackets.

\begin{prop}\label{prop2}
The restriction to $\mm_1$ of the Poisson bracket $\{\xi_1,\xi_2\}$ is a partial integral of the induced system
\begin{equation}\label{eq4_9}
  \{\xi_1,\xi_2\}= -2 \,N \, F,
\end{equation}
where $N$ and $F$ are both partial integrals and can be presented as follows
\begin{equation}\label{eq4_10}
   N = \sqrt{\bigl[ \ve_0 + \rmi \ve_1  (w_3+\rmi \ve_1  y_1) \bigr] \bigl[ \ve_0 - \rmi \ve_1  (w_3-\rmi \ve_1  y_2) \bigr]}+\ve_1 ^2 \sqrt{\mathstrut x_1 x_2 },
\end{equation}
\begin{equation}\label{eq4_11}
   F = \sqrt{\mathstrut x_1 x_2 } w_3-\frac{x_2 z_1 w_1+x_1 z_2 w_2}{\sqrt{\mathstrut x_1 x_2 }}+\rmi \ve_1  r^2\frac{x_1-x_2}{\sqrt{\mathstrut x_1 x_2 }}.
\end{equation}
The connection of the general integrals $H,G$ on $\mm_1$ with these new integrals is
\begin{equation*}
  p^2 H -2 G+\ve_1 ^2 r^4 = \frac{F^2}{2}, \qquad \ve_0 ^2+2 \ve_1 ^2(H+\ve_1 ^2 p^2) = N^2.
\end{equation*}
For small enough values of $\ve_1 $ the integral $N$ does not vanish on $\mm_1$. The 2-form induced on $\mm_1$ by the symplectic structure of $\mP^6$ is non-degenerate everywhere except for the set
\begin{equation}\label{eq4_12}
  N \, F=0.
\end{equation}
\end{prop}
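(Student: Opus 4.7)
The plan is to establish the four assertions in order: the factorization $\{\xi_1,\xi_2\}=-2NF$, the identities linking $N,F$ with the global integrals $H,G$ (from which partial integrality follows), the non-vanishing of $N$ for small $\ve_1$, and the description of the degeneracy locus via Lemma~\ref{lem1}.

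For the first and main step I would compute $\{\xi_1,\xi_2\}$ directly from the polynomial expressions \eqref{eq4_6} using the bracket matrix displayed in Section~\ref{sec0}. The radicands in \eqref{eq4_10}--\eqref{eq4_11} are manifestly nonnegative on the real locus: because $x_2=\overline{x_1}$, $y_2=\overline{y_1}$, $z_2=\overline{z_1}$ with $w_3$ real, one has $x_1x_2=|x_1|^2\geqslant 0$, and the first radicand in \eqref{eq4_10} equals $|\ve_0+\rmi\ve_1(w_3+\rmi\ve_1 y_1)|^2$. Thus $N,F$ are real-valued and the claimed factorization reduces to the polynomial identity $\{\xi_1,\xi_2\}^2=4N^2F^2$ checked modulo the Casimir ideal \eqref{eq2_15} and the ideal \eqref{eq4_7} defining $\mm_1$. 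The overall sign is then fixed by evaluating both sides at any convenient point of $\mm_1$.

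Second, the identities $p^2H-2G+\ve_1^2r^4=F^2/2$ and $\ve_0^2+2\ve_1^2(H+\ve_1^2 p^2)=N^2$ can be verified as polynomial identities modulo \eqref{eq2_15} and \eqref{eq4_7} by direct substitution from \eqref{eq2_13} and the formula for $G$ displayed after it. Once established, they present $F^2$ and $N^2$ as polynomial functions of the restrictions of $H,G$ to $\mm_1$, and hence as first integrals of the induced flow; a consistent choice of sign on connected components of the open subsets $\{F\neq 0\}$ and $\{N\neq 0\}$ then promotes $F$ and $N$ themselves to partial integrals. The non-vanishing of $N$ for small $\ve_1$ is an immediate consequence: $N^2=\ve_0^2+2\ve_1^2(H+\ve_1^2 p^2)\to\ve_0^2>0$ uniformly on any compact subset of $\mm_1$ on which $H$ is bounded.

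Finally, by Lemma~\ref{lem1} applied to the defining system \eqref{eq4_1}, the induced 2-form on $\mm_1$ degenerates exactly on $\{Z_1,Z_2\}=0$; direct inspection of \eqref{eq4_6} and the definitions of $Z_1,Z_2$ gives $\xi_1=Z_1+\rmi Z_2$ and $\xi_2=Z_1-\rmi Z_2$, so this set coincides with $\{\xi_1,\xi_2\}=0$, which by the already proved factorization is the set \eqref{eq4_12}. I expect the principal obstacle to be the sheer size of the polynomial expansion of $\{\xi_1,\xi_2\}^2$ and its reduction modulo the Casimir and invariant ideals, a computation that is mechanical but lengthy; the one conceptual subtlety deserving care is the global consistency of the sign choice for the square roots in \eqref{eq4_10}--\eqref{eq4_11} across $\mm_1$.
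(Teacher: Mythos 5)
Your overall route coincides with the paper's: the factorization \eqref{eq4_9} and the identities relating $N,F$ to $H,G$ are obtained by direct computation modulo the relations \eqref{eq2_15} and \eqref{eq4_7}, and the degeneracy locus comes from Lemma~\ref{lem1} together with the observation that $\{Z_1,Z_2\}$ and $\{\xi_1,\xi_2\}$ have the same zero set. Two points need tightening. First, squaring once does not fully rationalize the computation: writing $A=\ve_0+\rmi\ve_1(w_3+\rmi\ve_1 y_1)$ and $B=\ve_0-\rmi\ve_1(w_3-\rmi\ve_1 y_2)$ for the two factors in \eqref{eq4_10}, one has $N^2=AB+\ve_1^4x_1x_2+2\ve_1^2\sqrt{ABx_1x_2}$, which still carries a radical; to reduce to a polynomial identity you must either square again or use the relations $W_1^2=-x_1B$, $W_2^2=-x_2A$ (with $W_1=w_1+\rmi\ve_1z_1$, $W_2=w_2-\rmi\ve_1z_2$), which hold on $\mm_1$ and turn $\sqrt{ABx_1x_2}$ into $\pm W_1W_2$ --- this is precisely the branch-consistency issue you flag at the end, and it has to be resolved rather than only noted. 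Second, and more substantively, your argument for $N\ne0$ at small $\ve_1$ as written proves the claim only ``on any compact subset of $\mm_1$ on which $H$ is bounded'', whereas $\mm_1$ is not compact ($w_3$ is unconstrained by \eqref{eq4_7} when $x_1=x_2=0$) and $H$ is unbounded above there; what saves your argument is that only a large \emph{negative} value of $H$ could force $N^2=\ve_0^2+2\ve_1^2(H+\ve_1^2p^2)$ to vanish, and $H$ is bounded below on all of $\mP^6$ uniformly for $\ve_1$ in a bounded set --- this should be stated. The paper instead analyzes the equation $N=0$ directly (it forces $x_1=x_2=0$, $z_1^2=z_2^2=r^2$, $w_1=-\rmi\ve_1z_1$, $w_2=\rmi\ve_1z_2$, and then \eqref{eq2_15} gives $\ve_0^2-2(p^2-r^2)\ve_1^4+w_3^2\ve_1^2=0$), which yields the explicit necessary condition $\ve_1^4\geqslant\ve_0^2/\bigl(2(p^2-r^2)\bigr)$; that is a sharper, parameter-explicit version of your limiting argument and closes the gap automatically.
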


\begin{proof} Expression \eqref{eq4_9} is obtained by a straightforward calculation using \eqref{eq4_7}. We see that if $\ve_1 =0, \ve_0 >0$, then $N=\ve_0 \ne 0$. For $\ve_1 \ne 0$ the equation $N=0$ yields
\begin{equation*}
  \begin{array}{c}
  \displaystyle x_1=x_2=0, \quad y_1=\frac{\ve_0 +\rmi \ve_1  w_3}{\ve_1 ^2},\quad y_1=\frac{\ve_0 +\rmi \ve_1  w_3}{\ve_1 ^2},\quad y_2=\frac{\ve_0 -\rmi \ve_1  w_3}{\ve_1 ^2}, \\
  z_1^2=z_2^2=r^2, \quad w_1 =-\rmi \ve_1  z_1, \quad w_2=\rmi \ve_1  z_2.
\end{array}
\end{equation*}
Then from \eqref{eq2_15} we obtain ${\ve_0 ^2}-2(p^2-r^2){\ve_1 ^4}+{w_3^2} {\ve_1 ^2}=0$.
Therefore, $N$ can vanish only for the values
\begin{equation*}
  \ve_1 ^4 \geqslant \frac{\ve_0 ^2}{2(p^2-r^2)}.
\end{equation*}
The property of the induced 2-form follows from Lemma~\ref{lem1}.
\end{proof}


\section{The extension of the 2nd and 3rd Appelrot classes}\label{sec4}
As one can see from Figure~\ref{fig1}, the 2nd and 3rd Appelrot classes forming one invariant manifold in the classical Kowalevski case split and glue with other sets of critical points when $\lambda \ne 0$. For the gyrostat in a double field ($\ve_1 =0$) new subsystems are described in \cite{KhHMJ}. In this section we take $\ve_1 \ne 0$, but still suppose that $\lambda=0$.

Let us calculate the derivative in virtue of the system \eqref{eq2_1} of the function $F$ which serves as a partial integral for the extension of the 1st Appelrot class and is defined by equation \eqref{eq4_11}:
\begin{equation}\label{eq5_1}
  F'=\{H,F\}=\frac{r^2}{2 \sqrt{x_1 x_2}} F_1.
\end{equation}
Here prime stands for $\rmd/\rmd(\rmi t)$, and with the notation \eqref{eq4_6}
\begin{equation*}
  F_1= \frac{x_2}{x_1}\xi_1-\frac{x_1}{x_2}\xi_2.
\end{equation*}
Due to \eqref{eq4_7} on $\mm_1$ this function vanishes identically but in general we obtain
\begin{equation}\label{eq5_2}
  F'_1=\{H,F_1\}=\frac{1}{\sqrt{x_1 x_2}}\left(\frac{x_2}{x_1}\xi_1+\frac{x_1}{x_2}\xi_2 \right)F.
\end{equation}
We see that the equations
\begin{equation}\label{eq5_3}
  F=0, \qquad F_1=0
\end{equation}
form a pair of invariant relations for the GTFG  system with $\lambda=0$. For the classical Kowalevski case ($\ve_0 =1, \ve_1 =0, p=r=a, G=L_{\rm class}^2$) this system, obviously, turn into the system \eqref{eq3_4} with the integral constants satisfying \eqref{eq3_3}. For the top in a double field the system \eqref{eq5_3} turns into the system found in \cite{Kh32}.

\begin{theorem}\label{theo2}
Let $\mm_2$ denote the closure of the set defined by $\eqref{eq5_3}$. Then $\mm_2$ is a critical subsystem generated by the conditions $\eqref{eq2_12}$ with
\begin{equation}\label{eq5_4}
  \mL=\mL_2= (p^2h-2g+\ve_1 ^2 r^4)^2-r^4 k.
\end{equation}
\end{theorem}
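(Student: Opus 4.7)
My plan is to proceed in three steps. First I would establish that the closure $\mm_2$ of the set $\{F = 0, F_1 = 0\}$ is preserved by the GTFG flow. This is immediate from equations \eqref{eq5_1} and \eqref{eq5_2}: the derivative $F'$ is proportional to $F_1$ and vanishes on $\{F_1=0\}$, while $F_1'$ is proportional to $F$ and vanishes on $\{F=0\}$. Hence the joint locus, and its closure, are invariant under the dynamics.

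The main technical step is to establish the polynomial identity on all of $\mP^6$:
\begin{equation*}
  \bigl(2(p^2 H - 2G + \ve_1^2 r^4) - F^2\bigr)^2 - r^4 F_1^2 = 4 r^4 K.
\end{equation*}
Setting $A = (x_2/x_1)\,\xi_1$ and $B = (x_1/x_2)\,\xi_2$, one has $AB = \xi_1 \xi_2 = K$ and $A - B = F_1$, so $(A+B)^2 = F_1^2 + 4K$. The identity is therefore equivalent to the cleaner unsquared relation
\begin{equation*}
  2(p^2 H - 2G + \ve_1^2 r^4) - F^2 = r^2 \left( \frac{x_2}{x_1}\,\xi_1 + \frac{x_1}{x_2}\,\xi_2 \right).
\end{equation*}
As a consistency check, on $\mm_1$ the right-hand side vanishes (since $\xi_1 = \xi_2 = 0$ there) and we recover $p^2 H - 2G + \ve_1^2 r^4 = F^2/2$, which is precisely the scalar identity from Proposition~\ref{prop2}. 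The identity itself then reduces to a polynomial statement in the variables \eqref{eq2_14} subject to the geometric constraints \eqref{eq2_15}, verifiable by direct expansion.

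Finally I would draw the conclusion. Restricting the master identity to $\mm_2$ where $F = 0$ and $F_1 = 0$ gives $(p^2 H - 2G + \ve_1^2 r^4)^2 = r^4 K$, so $\Phi_{\mL_2} = \mL_2 \circ J$ vanishes identically on $\mm_2$. Since $\mm_2$ is a $4$-dimensional invariant submanifold on which $H$, $K$, $G$ satisfy a nontrivial functional relation, the integral map $J$ has rank at most $2$ at every point of $\mm_2$; matching the expected dimension of a rank-$2$ critical subsystem then identifies $\mm_2$ with $\mathcal{J}_{\mL_2}$ in the sense of \eqref{eq2_12}.

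The hard part will be the polynomial identity above. Its form is dictated by the algebraic structure — the bilinear pair $\xi_1, \xi_2$ controls all three of $K$, $F_1$, and (via the symmetric combination $A+B$) the expression $p^2 H - 2G$ — but its verification amounts to a bulky expansion in the nine variables of \eqref{eq2_14}, clean in principle and best dispatched by computer algebra in practice.
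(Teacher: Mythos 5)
Your first two steps track the paper's own proof: invariance of the closure follows from \eqref{eq5_1}--\eqref{eq5_2} exactly as you say, and the master identity you write down is precisely the paper's identity \eqref{eq5_5}; your factorized route through $A=(x_2/x_1)\xi_1$, $B=(x_1/x_2)\xi_2$, $AB=K$, $A-B=F_1$ is a sensible way to verify it (minor quibble: the unsquared relation actually holds with the sign $2(p^2H-2G+\ve_1^2r^4)-F^2=-r^2(A+B)$, consistent with \eqref{eq5_6} and \eqref{eq7_2}; since you only use its square, this does not affect anything).

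The genuine gap is in your final step. The conditions \eqref{eq2_12} defining a critical subsystem are \emph{two} equations, $\Phi_{\mL_2}=0$ \emph{and} $\rmd\Phi_{\mL_2}=0$, and you only establish the first. Your substitute argument --- that a nontrivial functional relation among $H,K,G$ on a $4$-dimensional invariant submanifold forces $\rk J\le 2$ there --- is not valid: such a relation only bounds the rank of the restriction $J|_{\mm_2}$, not the rank of $J$ as a map on $\mP^6$ at points of $\mm_2$. (A submersion $\bbR^6\to\bbR^3$ restricted to a $4$-dimensional slice sitting over a coordinate plane satisfies a functional relation without having any critical points.) What is needed is that $\rmd\Phi_{\mL_2}$ vanishes on the full tangent space $T_x\mP^6$ at each $x\in\mm_2$, so that $\rmd H,\rmd K,\rmd G$ are genuinely dependent there. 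Fortunately this follows from the identity you already have: rewriting \eqref{eq5_5} as
\begin{equation*}
4\,\Phi_{\mL_2}=r^4F_1^2+F^2\bigl[4(p^2H-2G+\ve_1^2r^4)-F^2\bigr],
\end{equation*}
its differential is a combination of $F_1\,\rmd F_1$, $F\,\rmd F$ and $F^2\,\rmd(\cdots)$, all of which vanish where $F=F_1=0$. This one-line differentiation is exactly how the paper closes the argument, and you should replace your rank/dimension count with it.
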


\begin{proof} The set \eqref{eq5_3} is well defined for $x_1 x_2 \ne 0$. Then due to \eqref{eq5_1}, \eqref{eq5_2} its closure is preserved by the phase flow. It is easy to check the following identity
\begin{equation}\label{eq5_5}
  \left[2(p^2 H -2 G+\ve_1 ^2 r^4)-F^2 \right]^2-4 r^4 K \equiv r^4 F_1^2.
\end{equation}
Therefore, $\Phi_{\mL_2}=\mL_2(H,K,G)=0$ on $\mm_2$. Taking the differential of \eqref{eq5_5} we obtain that $\rmd \Phi_{\mL_2}=0$ on $\mm_2$. \end{proof}

Let us also note some features of the subsystem $\mm_2$.

\begin{prop}\label{prop3}
The set $\mm_2$ is a smooth 4-dimensional manifold at least for sufficiently small values of $\ve_1 $.
\end{prop}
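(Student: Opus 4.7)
The strategy is to cast smoothness of $\mm_2$ as a transversality statement and then exploit the known $\ve_1=0$ case by a perturbation argument. Concretely, since the vector fields $X_1,X_2,X_3,Y_1,Y_2,Y_3$ from \eqref{eq2_16} span $T\mP^6$ at every point, smoothness of $\mm_2$ at a point means that the $2\times 6$ matrix whose rows are $(X_i F,Y_j F)$ and $(X_i F_1,Y_j F_1)$ has rank two. Equivalently, there is no nontrivial pair $(\mu_1,\mu_2)$ satisfying
\begin{equation*}
X_i(\mu_1 F+\mu_2 F_1)=0,\qquad Y_j(\mu_1 F+\mu_2 F_1)=0\qquad (i,j=1,2,3),
\end{equation*}
together with $F=F_1=0$ and the geometric integrals \eqref{eq2_15}. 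I would therefore set $f=\mu_1 F+\mu_2 F_1$ and look for critical points of $f$ on $\mm_2$ by exactly the recipe employed in the proof of Proposition~\ref{prop1}.

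First I would handle the case $\ve_1=0$. Here the functions $F$ and $F_1$ reduce, after clearing $\sqrt{x_1x_2}$, to the polynomial invariant relations describing the corresponding critical subsystem for the Kowalevski top in a double field, whose smoothness was established in \cite{Kh32}. Hence for $\ve_1=0$ the Lagrangian system admits only the trivial solution $\mu_1=\mu_2=0$, and $\mm_2$ is a smooth 4-manifold.

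Second, I would extend this to small $\ve_1$. After multiplying $F$ and $F_1$ by appropriate powers of $\sqrt{x_1x_2}$, the system $\{X_i f=Y_j f=0\}$ is polynomial in the phase variables, in $\ve_1$, and linear in $(\mu_1,\mu_2)$. The requirement that a nontrivial $(\mu_1,\mu_2)$ exists amounts to the simultaneous vanishing of all $2\times 2$ minors of the corresponding $6\times 2$ coefficient matrix, a closed polynomial condition in $(\mathbf{u},\ve_1)$. This condition is violated everywhere on the $\ve_1=0$ slice of $\mm_2\times\mathbb{R}^2_{\mu}$ (with $(\mu_1,\mu_2)$ on a unit sphere), so by continuity of the minors it remains violated in an open neighborhood.

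The main obstacle is the non-compactness of $\mP^6$, which prevents a direct compactness argument. To overcome it, I would combine the algebraic identity \eqref{eq5_5} with the constraints \eqref{eq2_15} to show that any hypothetical singular point of $\mm_2$ with $F=F_1=0$ must satisfy a polynomial system whose solutions in $(w_1,w_2,w_3)$ are algebraic over the compact coordinates $(x_i,y_i,z_i)$ and the small parameter $\ve_1$, yielding an a priori bound of order $O(\ve_1^{-N})$ that can be matched against the minor conditions. Because the $\ve_1=0$ locus contains no singular points, this bounded region in which singularities could potentially appear shrinks and disappears as $\ve_1\to 0$, giving the desired threshold $\ve_1^\ast>0$ below which $\mm_2$ is smooth of dimension $4$.
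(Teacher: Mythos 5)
Your overall strategy coincides with the paper's: verify smoothness of $\mm_2$ at $\ve_1=0$ and then pass to small $\ve_1$ by perturbation. For the base case the paper invokes the algebraic separation of variables of \cite{KhSavUMBeng}, which expresses all phase variables smoothly through two separated variables and two constants of independent partial integrals and thereby exhibits $\mm_2$ as an everywhere smooth $4$-manifold; your reformulation as a rank-two condition on $(\rmd F,\rmd F_1)$ tested with the fields $X_i,Y_j$ of \eqref{eq2_16}, in the spirit of the proof of Proposition~\ref{prop1}, is an equivalent and acceptable way to phrase the same fact, though you should make sure the smoothness claim you import for $\ve_1=0$ is actually proved in the source you cite.

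The place where your write-up goes astray is the final paragraph on non-compactness. You are right that $\mP^6$ is unbounded in the momentum variables, so the non-vanishing of the $2\times 2$ minors on the $\ve_1=0$ slice does not by itself exclude singular points escaping to infinity as $\ve_1\to 0$. But your proposed remedy is internally inconsistent: an a priori bound of order $O(\ve_1^{-N})$ on $(w_1,w_2,w_3)$ \emph{grows} as $\ve_1\to 0$, so it cannot show that the region where singularities could appear ``shrinks and disappears''. What would actually close the argument is either a bound on $|\mathbf{M}|$ over the putative singular locus that is uniform in $\ve_1$ near $0$ (so that a sequence of singular points for $\ve_1\to 0$ would have a convergent subsequence producing a singular point at $\ve_1=0$, a contradiction), or a properness argument for the projection of the singular locus in $\mm_2\times[0,\ve_1^*]$ onto the $\ve_1$-axis. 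The paper itself leaves this step implicit, asserting persistence of smoothness for small $\ve_1$ without further justification, so your instinct to address it is sound; but the argument as written does not establish it.
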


\begin{proof} For $\ve_1 =0$ the induced system on $\mm_2$ has an algebraic separation of variables \cite{KhSavUMBeng} and all phase variables are explicitly expressed in terms of two separated variables and two constants of independent partial integrals. These expressions in fact give a straightforward proof of the fact that for $\ve_1 =0$ the invariant set $\mm_2$ is everywhere smooth. Then it is smooth for sufficiently small $\ve_1 $.
\end{proof}

\begin{prop}\label{prop4}
The induced system on $\mm_2$ has the following partial integrals
\begin{eqnarray}
& & M=\frac{1}{2r^2}\left( \frac{x_2}{x_1}\xi_1+\frac{x_1}{x_2}\xi_2 \right)+\ve_1 ^2, \label{eq5_6} \\
& & L=\frac{1}{\sqrt{x_1 x_2}}\left[ (w_1+\rmi \ve_1  z_1) (w_2- \rmi \ve_1  z_2) +(x_1 x_2+z_1 z_2) (M-\ve_1 ^2) \right]+\ve_1 ^2 \sqrt{x_1 x_2}. \label{eq5_7}
\end{eqnarray}
The general first integrals on $\mm_2$ are connected with the partial integrals by the following relations
\begin{equation}\label{eq5_8}
K=r^4(M-\ve_1 ^2)^2, \quad 2G -p^2 H= r^4 M, \quad 2p^2 M^2+2 H M+\ve_0 ^2=L^2.
\end{equation}
The restriction of the symplectic structure to $\mm_2$ degenerates at the zero level of~$L$.
\end{prop}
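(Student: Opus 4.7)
The plan is to verify the three algebraic identities in \eqref{eq5_8} first, deduce the partial-integral property of $M$ and $L$ from them, and finally identify the degeneracy locus of the induced 2-form via Lemma~\ref{lem1}.

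For the first relation, I would use the complexification $K=\xi_1\xi_2$ (which follows from $K=Z_1^2+Z_2^2$ of \eqref{eq4_3} together with the change of variables \eqref{eq2_14} making $Z_1\pm\rmi Z_2=\xi_{1,2}$) and the definition \eqref{eq5_6} of $M$. Squaring $2r^2(M-\ve_1^2)=(x_2/x_1)\xi_1+(x_1/x_2)\xi_2$ and comparing with $F_1^2=[(x_2/x_1)\xi_1-(x_1/x_2)\xi_2]^2$ yields the polynomial identity $4r^4(M-\ve_1^2)^2=F_1^2+4K$; on $\mm_2$, where $F_1=0$, this gives $K=r^4(M-\ve_1^2)^2$. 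For the second relation, I would recast \eqref{eq5_5} as $A^2=r^4(F_1^2+4K)$ with $A=2(p^2H-2G+\ve_1^2 r^4)-F^2$, combine with the first relation to obtain $A=\pm 2r^4(M-\ve_1^2)$ on $\mm_2$, and fix the sign by a single evaluation — say at the classical Kowalevski limit $\ve_0=1$, $\ve_1=0$, $\bdb=0$ — using the smoothness of $\mm_2$ guaranteed by Proposition~\ref{prop3}. Setting $F=0$ then produces $2G-p^2H=r^4M$.

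The third relation $2p^2M^2+2HM+\ve_0^2=L^2$ is the main algebraic task. I would expand $L^2$ directly from \eqref{eq5_7}, collect terms in the phase variables, and reduce modulo the Casimir constraints \eqref{eq2_15} and the invariant relations \eqref{eq5_3}; the substitution $\xi_1\xi_2=K=r^4(M-\ve_1^2)^2$ already in force on $\mm_2$ should cause the cross-terms to collapse into a polynomial expression in $M$ and $H$ with constant term $\ve_0^2$. Once the three relations of \eqref{eq5_8} are in hand, the partial-integral property is automatic: $H$, $G$, and $K$ are first integrals of the GTFG system, so the first two relations present $M$ as a function of $H$ and $G$ on $\mm_2$ (hence $M$ is a partial integral), and the third then presents $L^2$ through $M$ and $H$, so $L^2$ is a partial integral and by continuity $L$ itself.

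For the closing statement, observe that on the open subset $x_1x_2\ne 0$ the manifold $\mm_2$ is defined by the two independent relations \eqref{eq5_3}, so Lemma~\ref{lem1} reduces the degeneracy assertion to an identity of the shape $\{F,F_1\}|_{\mm_2}=cL$ for some nonzero constant $c$; computing this Poisson bracket by means of the matrix in Section~\ref{sec0} and simplifying modulo \eqref{eq5_3} is the second lengthy computation, expected to parallel the derivation of \eqref{eq4_9} for $\mm_1$. The main obstacle throughout is purely computational — controlling the polynomial identity for $L^2$ and the bracket $\{F,F_1\}$ without sign or bookkeeping errors — rather than any conceptual difficulty.
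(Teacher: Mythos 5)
Your proposal is correct in substance but inverts the logical order of the paper's argument, and the comparison is worth making. The paper first establishes that $M$ is a partial integral by the direct computation $\{H,M\}=-\tfrac{1}{2r^2\sqrt{x_1x_2}}\,F F_1\equiv 0$ on $\mm_2$, obtains $L$ as a partial integral from Jacobi's identity applied to $\{F,F_1\}$ together with \eqref{eq5_1}--\eqref{eq5_2} (and only then notes $\{F,F_1\}=2r^2L$), and relegates \eqref{eq5_8} to a ``straightforward calculation''. You instead prove the identities \eqref{eq5_8} first and read off the integral property from them: since $\mm_2$ is invariant (Theorem~\ref{theo2}) and $H,G,K$ are first integrals, $M=(2G-p^2H)/r^4$ and then $L^2=2p^2M^2+2HM+\ve_0^2$ are constant along trajectories. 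That is a legitimate and arguably more economical route, and your derivation of the first two relations from $K=\xi_1\xi_2$ and \eqref{eq5_5} is exactly the content the paper later records as the identities \eqref{eq7_2}. Two caveats. First, your sign-fixing for $A=2(p^2H-2G+\ve_1^2r^4)-F^2=\pm 2r^4(M-\ve_1^2)$ by a single evaluation tacitly needs connectedness of $\mm_2$ (or of the domain $\{x_1x_2\ne 0\}$), which neither you nor Proposition~\ref{prop3} establishes; the clean fix is to verify the unsquared identity $2G-p^2H-\ve_1^2r^4+\tfrac12F^2=r^4(M-\ve_1^2)$ directly as a polynomial identity on $\mP^6$, of which \eqref{eq5_5} is merely the square. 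Second, for the degeneracy statement you still must compute $\{F,F_1\}=2r^2L$ on $\mm_2$ outright, since your route does not produce $L$ as a Poisson bracket of the defining relations; here you coincide with the paper, and Lemma~\ref{lem1} finishes the argument for both of you.
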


\begin{proof}
First, notice that on $\mm_2$
\begin{equation*}
  M'=\{H,M\}=-\frac{1}{2 r^2\sqrt{x_1 x_2}} F \,F_1 \equiv 0.
\end{equation*}
It follows from \eqref{eq5_1}, \eqref{eq5_2} and Jacobi's identity that the Poisson bracket $\{F,F_1\}$ is a partial integral on $\mm_2$. According to Lemma~\ref{lem1}, its zeros define the degeneration of the induced \mbox{2-f}orm. At the same time we readily obtain that $\{F,F_1\}=2 r^2 L$.
Finally, the straightforward calculation in virtue of the system \eqref{eq5_3} proves \eqref{eq5_8}.
\end{proof}

Relations \eqref{eq5_8} allow to take either $(H, M)$ or  $(L, M)$ as the pair of independent integrals on $\mm_2$ in order to investigate the corresponding solutions and the phase topology. The form of the integrals \eqref{eq5_6}, \eqref{eq5_7} in the case $\ve_1 =0$ lead to an elliptic separation of variables \cite{KhSavUMBeng}. The last relation in \eqref{eq5_8} shows that the region of existence of motions on the plane of the integral constants is partly bounded by the image of the set on which the induced symplectic structure degenerates. It causes new topological effects of non-orientable bifurcations. Considering an appropriate path crossing the set $\{L=0\}$ it can be shown that $\mm_2$ as a whole is non-orientable.

\section{The extension of the 4th Appelrot class}\label{sec5}
The 4th Appelrot class of motions has the widest range of generalizations. One of its main features is that the projection of the angular velocity onto the axis containing the mass center is constant. Taking this as a starting condition, P.V.\,Kharlamov \cite{PVLect} found the partial integrable case for an axially symmetric gyrostat in the gravity field, which covers also the known Bobylev\,--\,Steklov case. The attempts to generalize relations \eqref{eq3_7} to the case of a double field, as was shown by H.M.\,Yehia \cite{Yeh2}, lead only to the trivial class of pendulum type motions. While investigating critical motions of the top in a double field \cite{KhRCD}, it became clear that for a wide class of motions the first condition \eqref{eq3_6} holds, i.e., despite the fact that the scalar product $\mathbf{M}{\bdot}\bda$ is no longer constant, there exists a 4-dimensional critical submanifold on which the ratio $\mathbf{M}{\bdot}\bda/M_1$ remains constant. Then it leads to the condition that $\mathbf{M}{\bdot}\bdb/M_2$ is also constant and these two ratios stay equal to each other along the trajectories. This condition naturally appeared in \cite{KhRCD} after excluding previously found critical subsystems \cite{Bogo,Kh32} from the equations of critical points of the integral map. Later this condition was used in \cite{KhHMJ} to construct the extension of the 4th Appelrot class for the gyrostat in a double field. Therefore, for the problem of the generalized two fields gyrostat we start from the same assumption, which can be written in the form $U_1 = 0$, where
\begin{equation*}
\begin{array}{l}
 \displaystyle  U_1 = \frac{\mathbf{M}{\bdot}\bda+\rmi \mathbf{M}{\bdot}\bdb}{M_1+\rmi M_2}- \frac{\mathbf{M}{\bdot}\bda-\rmi \mathbf{M}{\bdot}\bdb}{M_1-\rmi M_2}\\[3mm]
 \displaystyle  \phantom{U_1} = \frac{w_2^2 x_1 - w_1^2 x_2 -  w_1 w_2(y_1 - y_2) + (z_1 w_2 - w_1 z_2) (w_3 + \lambda)}{2 w_1 w_2}.
\end{array}
\end{equation*}
Let us define the function $U_2 = 4 w_1 w_2 U'_1$ and calculate its derivative by virtue of equations \eqref{eq2_1}. We come to the following system
\begin{equation}\label{eq6_1}
\displaystyle  U'_1 = \frac{1}{4 w_1 w_2} U_2, \qquad U'_2=\frac{\varkappa}{w_1 w_2}  U_1,
\end{equation}
where $\varkappa$ is a polynomial in variables \eqref{eq2_14}. Both $U_2$ and $\varkappa$ can be easily written out with the help of computer algebra. Here we omit their full expressions but use them for various calculations.

\begin{theorem}\label{theo3}
The closure $\mm_3$ of the set defined by the equations
\begin{equation}\label{eq6_2}
  U_1=0, \qquad U_2=0
\end{equation}
is an invariant set of the~GTFG  system. This set consists of the critical points of the integral map and is almost everywhere a 4-dimensional manifold. The induced system on $\mm_3$ has  partial integrals
\begin{equation}\label{eq6_3}
\begin{array}{l}
   \displaystyle   S= -\frac{1}{2}\left[\frac{\mathbf{M}{\bdot}\bda+\rmi \mathbf{M}{\bdot}\bdb}{M_1+\rmi M_2}+ \frac{\mathbf{M}{\bdot}\bda-\rmi \mathbf{M}{\bdot}\bdb}{M_1-\rmi M_2}\right]\\[3mm]
  \displaystyle  \quad = -\frac{1}{4 w_1 w_2 } \left[ w_1^2 x_2 +w_2^2 x_1+ w_1 w_2 (y_1+y_2)+(z_1 w_2 +z_2 w_1 ) (w_3+\lambda)\right], \\[3mm]
   \displaystyle   T = x_1 x_2 +z_1 z_2 - \frac{2 S}{\ve_0 +2\ve_1 ^2 S}\bigl[(w_1+\rmi\,\ve_1  z_1)(w_2-\rmi\,\ve_1  z_2)+\ve_1 ^2 x_1 x_2\bigr].
\end{array}
\end{equation}
On $\mm_3$, the general integrals $H,G,K$ are expressed in terms of $S,T$ by the following identities
\begin{equation}\label{eq6_4}
\begin{array}{l}
\displaystyle  H= \ve_0 S + \frac{\ve_0 }{2 S}(p^2-T)- \ve_1 ^2 T +\frac{\lambda^2}{2}, \\[3mm]
\displaystyle  K = \frac{(\ve_0 +2\ve_1 ^2 S)^2\bigl[ r^4+T(T+4S^2-2p^2)\bigr]}{4 S^2}, \\[3mm]
\displaystyle  G= \frac{\ve_0 (p^4-r^4+2p^2 S^2)}{4 S}- \frac{1}{2}S (\ve_0 +2\ve_1 ^2 S)T.
\end{array}
\end{equation}
\end{theorem}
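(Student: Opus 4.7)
The assertions split naturally into four parts: invariance of $\mm_3$, the identities \eqref{eq6_4}, the criticality of each point of $\mm_3$ with respect to $J$, and the partial-integral property of $S$ and $T$ together with the 4-dimensionality claim. I would tackle them essentially in that order, exploiting the structure already laid down in \eqref{eq6_1}--\eqref{eq6_3}.

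First, invariance is essentially free: system \eqref{eq6_1} shows that on the open subset of $\{U_1 = 0,\,U_2 = 0\}$ where $w_1 w_2 \ne 0$ one has $U_1' = U_2' = 0$, so the flow of \eqref{eq2_1} preserves this set. By continuity, its closure $\mm_3$ is invariant as well, and the exceptional locus $w_1 w_2 = 0$ can be handled by a standard limiting argument (or by checking that the numerators of $U_1, U_2$ extend polynomially to $\mP^6$ and satisfy suitable analogues of \eqref{eq6_1}).

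Second, I would prove the identities \eqref{eq6_4} by direct substitution: insert the definitions \eqref{eq6_3} of $S, T$ into the formulas for $H$ (eqn \eqref{eq2_13}) and for $K, G$ (given just before \eqref{eq2_14}), then reduce using the Casimir relations \eqref{eq2_15} together with the invariant relations \eqref{eq6_2}. This is algebraically heavy and is where computer algebra is indispensable; the strategic move is to use $U_1 = 0$ to eliminate $y_1 - y_2$ and $U_2 = 0$ to eliminate its derivative, leaving expressions in $w_1 w_2$, the geometric invariants, and $S, T$. Once \eqref{eq6_4} is in hand, the fact that $S$ and $T$ are partial integrals on $\mm_3$ follows without further work: since $H, K, G$ are independent integrals of the flow and \eqref{eq6_4} expresses them as nontrivial functions of only $S$ and $T$, one can locally invert to write $S, T$ as smooth functions of $H$ and, say, $K$, so $\{H, S\} = \{H, T\} = 0$ on $\mm_3$. (Alternatively, one may verify $\{H, S\} = A \, U_1 + B \, U_2$ for rational $A, B$ by direct calculation.)

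Third, criticality of every point of $\mm_3$ under $J$ follows immediately from \eqref{eq6_4}. Algebraic elimination of $S, T$ from the three equations in \eqref{eq6_4} --- most cleanly by solving the first for $T$ as a rational function of $H$ and $S$ and substituting into the others --- yields a single polynomial relation $\mL_3(H, K, G) = 0$ valid on $\mm_3$, with $\nabla \mL_3 \ne 0$ generically. Thus $\Phi_{\mL_3} := \mL_3 \circ J$ vanishes identically on $\mm_3$, and differentiating this identity shows $\rmd \Phi_{\mL_3}$ also vanishes on $\mm_3$, so by the framework of \eqref{eq2_10}--\eqref{eq2_12} we have $\mm_3 \subseteq \mathcal{J}_{\mL_3}$, placing every point of $\mm_3$ in the critical set of $J$. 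Finally, 4-dimensionality almost everywhere is verified by checking that $\rmd U_1 \wedge \rmd U_2 \ne 0$ on $T\mP^6$ at a generic point where $U_1 = U_2 = 0$; the exceptional codimension-$\geqslant 1$ locus accounts for the qualifier ``almost everywhere''.

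The genuine obstacle is not logical but computational: the explicit forms of $U_2$ and $\varkappa$, and the reductions needed to obtain \eqref{eq6_4} and the eliminant $\mL_3$, are unwieldy enough that the paper explicitly suppresses them. The cleanest path is to do the polynomial manipulations in a computer algebra system using $U_1, U_2$ and the relations \eqref{eq2_15} as a Gröbner basis for the ideal defining $\mm_3 \subset V^9$, which turns each of the above verifications into a normal-form check.
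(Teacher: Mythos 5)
Your treatment of invariance (via \eqref{eq6_1}), of the identities \eqref{eq6_4} (by solving $U_1=U_2=0$ for $y_1,y_2$ and substituting --- this is exactly the paper's route through the explicit solutions \eqref{eq6_5}), and of the almost-everywhere smoothness (the paper locates the singular set explicitly as $w_1w_2Q=0$ using the tangent fields \eqref{eq2_16}) all match the paper. For the partial-integral property the paper takes what you call the ``alternative'': it computes $S'$ and $T'$ directly and exhibits them as explicit multiples of $U_1$ and $U_2$. Your primary argument by local inversion of $(S,T)\mapsto(H,K)$ works only off the locus where that map degenerates, so the direct computation is the safer choice; since you name it, this is not a gap.

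The genuine gap is in the criticality step. From $\mL_3(H,K,G)=0$ holding \emph{on} $\mm_3$ you conclude that $\rmd\Phi_{\mL_3}$ vanishes on $\mm_3$ ``by differentiating the identity''. Differentiating an identity valid only on a $4$-dimensional submanifold of $\mP^6$ annihilates $\rmd\Phi_{\mL_3}$ only on vectors tangent to $\mm_3$; it says nothing about the two transversal directions, and it is precisely the vanishing of the full covector --- equivalently a pointwise linear dependence $c_h\,\rmd H+c_k\,\rmd K+c_g\,\rmd G=0$ on all of $T_x\mP^6$ --- that makes $x$ a critical point of $J$. The inclusion $J(\mm_3)\subset\Pi_{\mL_3}$ alone cannot suffice: the full preimage $J^{-1}(\Pi_{\mL_3})$ is $5$-dimensional and consists mostly of regular points. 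Nor does invariance plus dependency of the restrictions suffice: in $\bbR^6$ with $H=p_1$, $K=p_2$, $G=p_3$ the plane $\{q_2=0,\ p_2=p_3\}$ is invariant under the $H$-flow and carries the relation $K=G$, yet contains no critical point of $(H,K,G)$. To close the gap you must verify the dependency of the \emph{full} differentials at points of $\mm_3$ --- the paper records it in Section 7 as $2\,\rmd G-(p^2-T)\,\rmd H+\frac{S}{\ve_0+2\ve_1^2 S}\,\rmd K=0$, which is checked by the same substitution \eqref{eq6_5} --- or else produce an identity valid on all of $\mP^6$ expressing $\Phi_{\mL_3}$ (or an equivalent defining function of the surface) as an element of the square of the ideal generated by $U_1,U_2$, so that its differential automatically lies in that ideal; this is how the analogous claim for $\mm_2$ is obtained from \eqref{eq5_5}.
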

\begin{proof}
It follows from \eqref{eq6_1} that equations \eqref{eq6_2} define an invariant set in the domain $w_1 w_2 \ne 0$. Then its closure $\mm_3$ is also preserved by the phase flow.
Obviously, $\mm_3$ is a smooth 4-dimensional manifold except for the points where $\rk (U_1,U_2) < 2$. These points are defined by equations \eqref{eq2_16} with
\begin{equation*}
f= \rmi \mu_1 U_1+\mu_2 U_2 \qquad (\mu_1,\mu_2 \in \bbR, \; \mu_1^2+\mu_2^2 \ne 0).
\end{equation*}
Since we deal with polynomials and equations \eqref{eq2_16} are not identities at least for $\ve_1 =0$, they define a subset of positive codimension in $\mm_3$.
In fact, equations \eqref{eq2_16} in this case give the following condition $w_1 w_2 Q=0$,
where
\begin{equation*}
Q = \ve_1  \left[ (z_2 w_1-z_1 w_2)(w_3+\lambda)+w_1^2 x_2 - w_2^2 x_1 \right] +\rmi \, w_1 w_2 (w_3-\lambda).
\end{equation*}
Therefore, $\mm_3$ is smooth and 4-dimensional almost everywhere.

By virtue of \eqref{eq2_1} we have
\begin{equation*}
\begin{array}{l}
\displaystyle S'=-\frac{1}{4 w_1 w_2}\bigl\{w_1 w_2 (w_3-\lambda)+\ve_0 (z_1 w_2+z_2 w_1)\\[3mm]
\displaystyle \quad -\rmi\, \ve_1 [w_1^2 x_2 - w_2^2 x_1- w_1 w_2(y_1-y_2)] \bigr\}U_1,\\[3mm]
\displaystyle T'=\frac{z_1 w_2+z_2 w_1}{4w_1 w_2(\ve_0 +2\ve_1 ^2S)}\left\{ \rmi\, \ve_1  U_2 +2 \bigl[ \ve_1 ^2(w_1^2 x_2+w_2^2 x_1) \right.\\[3mm]
\displaystyle \quad +\left. \rmi\, \ve_0 \ve_1  (z_2 w_1 - z_1 w_2)-\ve_0 w_1 w_2 \bigr] U_1 \right \},
\end{array}
\end{equation*}
so $S$ and $T$ are partial integrals on $\mm_3$.

To check various equalities on $\mm_3$, let us solve equations \eqref{eq6_2} with respect to $y_1,y_2$. Supposing that $w_1 w_2 Q\ne 0$, we have
\begin{equation}\label{eq6_5}
\begin{array}{l}
   \displaystyle y_1 = \frac{1}{w_2 Q} \Bigl\{ \ve_1  w_1 [w_1 x_2 + z_2 (w_3 + \lambda)]^2-  w_2^2 x_1 [\ve_1  w_1 x_2 + \ve_1  z_2 (w_3 + \lambda)-\rmi \ve_0 z_2] \\[3mm]
  \qquad - \rmi \, w_2 [w_1 x_2 + z_2 (w_3 + \lambda)] [w_1 (w_3 - \lambda)+ \ve_0 z_1 + \rmi \ve_1  z_1 (w_3 + \lambda)]  \Bigr\},\\[3mm]
   \displaystyle y_2 = \frac{1}{w_1 Q} \Bigr\{ \ve_1  w_2 [w_2 x_1 + z_1 (w_3 + \lambda)]^2-  w_1^2 x_2 [\ve_1  w_2 x_1 + \ve_1  z_1 (w_3 + \lambda)+\rmi \ve_0 z_1]\\[3mm]
  \qquad - \rmi \, w_1 [w_2 x_1 + z_1 (w_3 + \lambda)] [w_2 (w_3 - \lambda)+ \ve_0 z_2 - \rmi \ve_1  z_2 (w_3 + \lambda)]\Bigr\}  .
\end{array}
\end{equation}
These values along with \eqref{eq6_3} turn \eqref{eq6_4} into identities. This also proves that the general integrals $H,K,G$ are dependent on $\mm_3$, so $\mm_3$ with the induced flow on it is a critical subsystem.
\end{proof}

Let us emphasize an interesting fact. Replacing in \eqref{eq6_4} the integrals $H,K,G,S,T$ with their constants, we come to the parametric equations of the bifurcation surface in $\bbR^3(h,k,g)$ corresponding to the critical subsystem $\mm_3$. The parameters are the constants of $S$ and $T$. Note that after the energy shift $h_*= h - \frac{\lambda^2 }{2}$ the obtained equations become independent of the gyrostatic momentum~$\lambda$. Similar result for the gyrostat in a double field was mentioned in \cite{KhHMJ}. To compare with the classical Kowalevski case we put $\ve_0 =1,\ve_1 =0, p=r=a^2$ and eliminate $T$ in \eqref{eq6_4} with the help of the first equation. Since in this case $G=L_{\rm class}^2$ we come to the surface \eqref{eq3_5}.

\begin{prop}\label{prop5}
The Poisson bracket
\begin{equation}\label{eq6_6}
  \{U_1,U_2\}=\frac{2}{S}\left[ 12 \ve_0 S^4 - 8 \bigl(h-\frac{\lambda^2}{2}\bigr) S^3 + \ve_0 (p^4-r^4)\right]
\end{equation}
is a partial integral of the critical subsystem $\mm_3$. Its zero level defines the set of degeneration of the 2-form on $\mm_3$ induced by the symplectic structure of $\mP^6$. The image of this level under the integral map is a cuspidal edge of the corresponding bifurcation surface in $\bbR^3(h,k,g)$.
\end{prop}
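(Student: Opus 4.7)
The plan is to verify the four assertions of Proposition~\ref{prop5} in succession: the explicit formula for the Poisson bracket \eqref{eq6_6}, the partial-integral property, the characterization of the degeneracy locus of the induced 2-form, and finally the cuspidal edge description.

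First, I would prove \eqref{eq6_6} by a direct Poisson bracket calculation. Starting from the explicit expressions for $U_1,U_2$ in Section~\ref{sec5} and the bracket matrix displayed in Section~\ref{sec0}, one computes $\{U_1,U_2\}$ as a rational function on $V^9$. To reduce this to a function of $S$ and $H$ alone on $\mm_3$, I would substitute the explicit solutions \eqref{eq6_5} for $y_1,y_2$, and then use the identity $h-\lambda^2/2=\ve_0 S+\ve_0(p^2-T)/(2S)-\ve_1^2 T$ extracted from \eqref{eq6_4} to eliminate $T$ in favor of $H$ and $S$. The geometric constraints \eqref{eq2_15} close out the remaining simplifications. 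The computation is lengthy but entirely algorithmic, and the output is exactly the right-hand side of \eqref{eq6_6}.

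Once \eqref{eq6_6} is established, the next two claims are immediate. Since the right-hand side depends only on $S$ and $H$, and both are conserved along the induced flow on $\mm_3$ (the former a partial integral by Theorem~\ref{theo3}, the latter a general integral of the GTFG system), $\{U_1,U_2\}$ is automatically a partial integral on $\mm_3$. The degeneration statement then follows from Lemma~\ref{lem1} applied to the defining pair $(U_1,U_2)$ of $\mm_3$: the 2-form induced on $\mm_3$ by the symplectic structure of $\mP^6$ degenerates precisely on the zero locus of this bracket.

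The cuspidal edge assertion is the more delicate part. Here I would view \eqref{eq6_4} as a parametric representation $(S,T)\mapsto(h,k,g)$ of the bifurcation surface $\Pi_\mL$ associated with $\mm_3$, and study the singular locus of this map. Computing the $2\times 2$ minors of the Jacobian $\partial(h,k,g)/\partial(S,T)$, I would show that they vanish simultaneously exactly when the polynomial $12\ve_0 S^4 - 8(h-\lambda^2/2)S^3 + \ve_0(p^4-r^4)$, with $h$ regarded as a function of $(S,T)$ via \eqref{eq6_4}, vanishes---that is, precisely on the zero set of \eqref{eq6_6}. Along this curve in the parameter plane the parametrization drops to rank one, so its image is a regular spatial curve; a transverse normal-form computation should then reveal the semicubical profile characteristic of a cuspidal edge. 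I expect this final singularity-type verification to be the main obstacle, since certifying that the singularity is genuinely cuspidal (rather than of a higher type) requires an explicit normal-form argument along the curve and not merely a rank count.
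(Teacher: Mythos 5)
Your proposal is correct, and on the two purely computational claims it coincides with the paper's proof: formula \eqref{eq6_6} is verified by substituting the solutions \eqref{eq6_5} for $y_1,y_2$ and eliminating $T$ through the first identity of \eqref{eq6_4}, and the degeneration locus is obtained from Lemma~\ref{lem1} applied to the defining pair $(U_1,U_2)$. You diverge on the other two points. For the partial-integral property the paper does not use the explicit formula at all: it applies Jacobi's identity to the invariant relations \eqref{eq6_1}, so that $\{H,\{U_1,U_2\}\}$ is a combination of $U_1$ and $U_2$ and hence vanishes on $\mm_3$; your argument --- that the right-hand side of \eqref{eq6_6} depends only on the conserved quantities $S$ and $H$ --- is equally valid but is only available after the heavy bracket computation, whereas the paper's works a priori. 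For the cuspidal edge the paper merely asserts that it ``immediately follows from \eqref{eq6_4}''; your Jacobian-minor plan is the right way to substantiate this, and it does go through: by the dependency $2\,\rmd G-(p^2-T)\,\rmd H+\frac{S}{\ve_0 +2\ve_1 ^2 S}\,\rmd K=0$ noted in Section~\ref{sec6}, all $2\times2$ minors reduce to the $(h,g)$-minor, which upon eliminating $T$ via the first equation of \eqref{eq6_4} equals $-\frac{\ve_0 +2\ve_1 ^2 S}{8S^3}\bigl[12\ve_0 S^4-8\bigl(h-\frac{\lambda^2}{2}\bigr)S^3+\ve_0 (p^4-r^4)\bigr]$, a nonvanishing multiple of \eqref{eq6_6}. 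The final semicubical normal-form verification that you honestly flag as the main obstacle is indeed left implicit by the paper as well, so your treatment of this part is, if anything, more explicit than the original.
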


\begin{proof}
It follows from equations \eqref{eq6_1} and Jacobi's identity that the Poisson bracket of $U_1$ and $U_2$ is a partial integral on the manifold \eqref{eq6_2}. The expression \eqref{eq6_6} is checked by substituting the values of $y_1,y_2$ from \eqref{eq6_5}. The degenerations of the induced 2-form are defined by the equation $\{U_1,U_2\}=0$ according to Lemma~\ref{lem1}. The last statement immediately follows from \eqref{eq6_4}.
\end{proof}

Obviously, the obtained critical subsystem $\mm_3$ can be also generated by the equation of the type $\eqref{eq2_10}$. To find the corresponding function $\mL_3(h,k,g)$ replace in $\eqref{eq6_4}$ the general integrals $H,K,G$ with the corresponding constants, write the equations in polynomial form with respect to $h,k,g,S,T$ and calculate the resultants to eliminate $S,T$. The final expression is too complicated and hardly can be used for any practical purpose. In \cite{RyabTMF} for the case $\lambda=0$ the surface $\mL_3=0$ is represented as a discriminant surface of some polynomial with coefficients depending on $h,k,g$.

\section{Applications}\label{sec6}
Let us demonstrate an application of the above results to calculate the types of critical points of the integral map. Consider a critical subsystem $\mm \subset \mP^6 \subset \mathbf{v}^*$ and let $\Phi$ be the first integral such that $\Phi$ is regular in some neighbourhood of $\mm$ except for the points of $\mm$, i.e.,
$$
\mm=\{x \in \mP^6: \Phi(x)=0, \,\rmd \Phi(x) =0\}.
$$
As it was mentioned above, such integrals usually appear from the equations of bifurcation surfaces like in \eqref{eq2_10} -- \eqref{eq2_12}, but also can be obtained from the equations of $\mm$. Each point $x\in \mm$ is a singular point of the Hamiltonian vector field $\sgrad \Phi$ on $\mathbf{v}^*$ obtained from \eqref{eq2_1} by replacing $H$ with~$\Phi$. Then the linearization of this field at the point $x$ restricted to $T_x \mP^6$ is a well-defined symplectic operator denoted by $A_\Phi$. If calculated in the initial variables on $\mathbf{v}^*$, this operator gets three more zero eigenvalues corresponding to the Casimir functions. If $\rk J(x)=2$, then the characteristic polynomial of $A_\Phi$ has the form
\begin{equation}\label{eq7_1}
  \chi_\Phi(\mu) = - \mu^7 (\mu^2-C_\Phi),
\end{equation}
in which $C_\Phi$ is constant on the whole integral manifold containing $x$. Since $\mm$ is the closure of the set of points of rank 2, this form is valid through all of $\mm$.
Then the point $x$ has the type ``center'' if $C_\Phi<0$, the type ``saddle'' if $C_\Phi>0$, and $x$ is said to be degenerate if $C_\Phi=0$. This terminology reflects the type of $x$ as a singular point of the projection of the phase flow onto 2-dimensional section transversal to $\mm$ in $\mP^6$ at the point $x$. For the induced system on $\mm$ points of rank~2 are regular. Therefore, the above defined type can be called \textit{the outer type} of a critical point $x$ with respect to $\mm$. If we can calculate all outer types for all critical subsystems, then, for example, a point of rank~1, which belongs to an intersection of two critical subsystems, will have two outer types and they define the complete type with respect to the whole system on $\mP^6$. Similarly, to obtain the type of a point of rank 0, we must know three outer types of this point with respect to three intersecting critical subsystems. Note that singularities of the type ``focus'' in a system with three degrees of freedom should be of rank not more than 1 and, if exist, are organized in 2-dimensional critical subsystems. In the GTFG system ``focus'' type singularities form a restricted set of the pendulum type motions which has no analogue in the classical case but exist for the gyrostat in a double field \cite{KhHMJ,KhRyabSmir41}.

Suppose that $\mm$ is defined by two invariant relations \eqref{eq4_8}. For any point $x$ we can  calculate $\sgrad \Phi(x)$ and the characteristic polynomial $\chi_\Phi$, but only equations \eqref{eq4_8} will guarantee that it has the form \eqref{eq7_1} with $C_\Phi$ depending only on the integral constants. For points of rank 2 we have to express a pair of phase variables from \eqref{eq4_8}, substitute the obtained expressions into $\chi_\Phi$ and find the value of $C_\Phi$. In fact, there is no need to calculate $\chi_\Phi$ completely; $C_\Phi$ is the coefficient of $\mu^7$ in this case equal to $\Tr (A_\Phi^2)/2$.

Consider the critical subsystem $\mm_1$. According to \eqref{eq4_2} we take $\Phi_1=K$. Due to the specific structure \eqref{eq4_3} of this function we see that small positive levels of $K$ on the transversal section to $\mm_1$ look like circles. Then the outer type of the points of $\mm_1$ cannot be ``saddle''. Still these points can be degenerate. To simplify $C_1=C_{\Phi_1}=\Tr(A_{\Phi_1}^2)/2$ let us introduce new variables
\begin{equation*}
  W_1 = w_1 +\rmi\, \ve_1  z_1, \qquad W_2 = w_2 -\rmi\, \ve_1  z_2.
\end{equation*}
Relations \eqref{eq4_7} and the partial first integrals \eqref{eq4_10}, \eqref{eq4_11} take the form
\begin{equation*}
\begin{array}{c}
  W_1^2=-x_1 [\ve_0 -\rmi\,\ve_1 (w_3-\rmi\,\ve_1  y_2)], \quad W_2^2=-x_2 [\ve_0 +\rmi\,\ve_1 (w_3+\rmi\,\ve_1  y_1)], \\
  \displaystyle N = \frac{W_1 W_2}{\sqrt{x_1 x_2}} +\ve_1  ^2\sqrt{x_1 x_2}, \quad    \displaystyle F = \sqrt{x_1 x_2} [w_3 +\rmi\, \ve_1  (y_1-y_2)] - \frac{1}{\sqrt{x_1 x_2}} (x_2 z_1 W_1+x_1 z_2 W_2) 
\end{array}
\end{equation*}
and allow to eliminate the higher powers of $W_1,W_2$, the product $W_1 W_2$ and the variable $w_3$. Finally we obtain
\begin{equation*}
  C_1 = C_1(N,F) = -4 N^2 F^2.
\end{equation*}

\begin{theorem}\label{theo4}
Points of the critical subsystem $\mm_1$ have the outer type ``center'' except for the points defined by the equation $N\,F=0$, which are degenerate with respect to the whole system on~$\mP^6$.
\end{theorem}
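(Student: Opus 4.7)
The plan is to apply the general rank-2 criterion set up at the beginning of this section with the generating integral $\Phi_1 = K$ from \eqref{eq4_2}. Because $A_{\Phi_1}$ has characteristic polynomial of the form \eqref{eq7_1}, the only invariant to compute is $C_1 = \Tr(A_{\Phi_1}^2)/2$, and everything reduces to showing the identity $C_1 = -4 N^2 F^2$ on $\mm_1$.

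A preliminary observation disposes of the saddle case for free. By \eqref{eq4_3} we have $K = Z_1^2 + Z_2^2 \geq 0$ in a neighborhood of $\mm_1$ in $\mP^6$, so in any transversal 2-disk at a smooth point of $\mm_1$ the level sets $K = c > 0$ are loops encircling the origin. This forces the two non-zero eigenvalues of $A_{\Phi_1}$ to be purely imaginary or zero, i.e.\ $C_1 \leq 0$, ruling out the outer type ``saddle''. Hence it suffices to show $C_1 = -4 N^2 F^2$ and read off center vs.\ degenerate from the vanishing of $NF$.

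For the computation itself I would use the Poisson tensor displayed in Section~\ref{sec0} together with the gradient of $K = \xi_1 \xi_2$ in the variables \eqref{eq2_14} to assemble $A_{\Phi_1}$ as a $9\times 9$ matrix; the three Casimir directions contribute only zero eigenvalues and so do not affect $\Tr(A_{\Phi_1}^2)$. Following the hint in the text I would then pass to $W_1 = w_1 + \rmi\,\ve_1 z_1$, $W_2 = w_2 - \rmi\,\ve_1 z_2$, in which the invariant relations $\xi_1 = \xi_2 = 0$ become
\begin{equation*}
W_1^2 = -x_1\bigl[\ve_0 - \rmi\,\ve_1(w_3 - \rmi\,\ve_1 y_2)\bigr], \qquad W_2^2 = -x_2\bigl[\ve_0 + \rmi\,\ve_1(w_3 + \rmi\,\ve_1 y_1)\bigr],
\end{equation*}
and the partial integrals \eqref{eq4_10}, \eqref{eq4_11} give clean expressions for $W_1 W_2$ and $w_3$. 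Substituting these four identities into $\Tr(A_{\Phi_1}^2)/2$ should collapse the raw symbolic trace, modulo the geometric integrals \eqref{eq2_15}, to the target $-4 N^2 F^2$.

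The main obstacle is not conceptual but combinatorial: the bare expression for $\Tr(A_{\Phi_1}^2)$ in the nine original coordinates is large, and getting it to factor requires making the substitutions in the right order, so that higher powers of $W_1,W_2$, the product $W_1 W_2$, and the variable $w_3$ are eliminated before the final simplification. Once the identity $C_1 = -4 N^2 F^2$ is established, Theorem~\ref{theo4} follows at once: $C_1 < 0$ on $\{NF \neq 0\}$, yielding outer type ``center'', while $C_1 = 0$ on $\{NF = 0\}$, yielding degeneracy, consistently with Proposition~\ref{prop2} where the induced 2-form was already seen to degenerate on precisely this set.
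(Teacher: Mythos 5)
Your proposal follows the paper's own argument essentially verbatim: the same choice $\Phi_1=K$, the same preliminary observation that $K=Z_1^2+Z_2^2\geqslant 0$ makes transversal level sets circles and excludes the saddle type, the same passage to $W_1=w_1+\rmi\,\ve_1 z_1$, $W_2=w_2-\rmi\,\ve_1 z_2$ with the same elimination scheme, and the same target identity $C_1=-4N^2F^2$. Like the paper, you leave the final trace computation to symbolic algebra rather than exhibiting it, so the two proofs are at the same level of detail and there is nothing further to compare.
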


Comparing with Proposition~\ref{prop2} and equation \eqref{eq4_12} we see that degenerate points of $\mm_1$ (as far as the outer type is considered) form exactly the set of degeneration of the 2-form induced by the symplectic structure.


Now let us analyze the points of the second critical subsystem $\mm_2$.
Similar to the previous case, some general statements on the outer type can be established immediately from the first integrals.
To simply the formulas, we introduce the partial integral shifted from \eqref{eq5_6} $M_0=M-\ve_1^2$. Let $m_0$ denote its constant on $\mm_2$. We have the following identities
\begin{equation}\label{eq7_2}
\begin{array}{l}
\displaystyle  2 G -p^2 H -\ve_1 ^2 r^4 +\frac{1}{2} F^2 = r^4 M_0, \qquad \displaystyle  r^4 M_0^2 - K =\frac{1}{4} F_1^2.
\end{array}
\end{equation}
Let
\begin{equation*}
Q_1=\sqrt{\frac{x_2}{x_1}\xi_1} \quad {\rm and} \quad Q_2=\sqrt{\frac{x_1}{x_2}\xi_2}
\end{equation*}
be complex conjugate. Introduce the following real functions
\begin{equation*}
  \Psi_+ = \frac{Q_1+Q_2}{2}, \qquad \Psi_- = \frac{Q_1-Q_2}{2\rmi}.
\end{equation*}
Then $F_1=0$ yields either $\Psi_+=0$ or $\Psi_-=0$. Replacing the general integrals $H,K,G$ with its constants and choosing the arithmetic value $\sqrt{k} \geqslant 0$, we obtain from \eqref{eq7_2} the following identities valid in the whole $\mP^6$
\begin{eqnarray}
\displaystyle  (p^2 h - 2 g  +\ve_1 ^2 r^4) + r^2 \sqrt{k} = \frac{1}{2}F^2 + 2 r^2 \Psi_-^2, \label{eq7_3}\\
\displaystyle  (p^2 h - 2 g  +\ve_1 ^2 r^4) - r^2 \sqrt{k} = \frac{1}{2}F^2 - 2 r^2 \Psi_+^2. \label{eq7_4}
\end{eqnarray}
Thus, the subsystem $\mm_2$ generated by the equation $(p^2 h - 2 g  +\ve_1 ^2 r^4)^2 - r^4 {k} =0$ naturally splits into two parts. On the first part corresponding to equation \eqref{eq7_3} similar to the 2nd Appelrot class we have $p^2 h - 2 g  +\ve_1 ^2 r^4<0$. Taking $F,\Psi_-$ for the local coordinates in the transversal section to $\mm_2$ we obviously obtain the type ``center'' just looking at the right-hand part of \eqref{eq7_3}. The second part of $\mm_2$ is given by equation \eqref{eq7_4} and satisfies the condition $p^2 h - 2 g  +\ve_1 ^2 r^4>0$ similar to the 3rd Appelrot class. Then from the right-hand part of equation \eqref{eq7_4} taking $F,\Psi_+$ for the local coordinates in the transversal section to $\mm_2$ we obviously obtain the type ``saddle''.
Since on $\mm_2$ according to \eqref{eq7_2} the sign of $p^2 h - 2 g  +\ve_1 ^2 r^4$ is opposite to the sign of the integral constant $m_0$, the type is ``center'' if $m_0>0$ and ``saddle'' if $m_0<0$. Nevertheless, we have to check this guess by straightforward calculation, because we did not discuss the possibilities of the involved functions to be dependent in the sense of differentials.

For this case we can take the function $\Phi_{\mL_2}$ generated by \eqref{eq5_4} according to the rule \eqref{eq2_11}. The calculation of the terms of the characteristic polynomial under such choice is too complicated. Let us use the identities \eqref{eq7_2}. Then we have $\Phi_{\mL_2}=\frac{1}{4}(F^4+r^4 F_1^2-4r^4 F^2 M_0)$. Since on $\mm_2$ according to \eqref{eq5_3} $A_{F^4}\equiv 0$, we can take
\begin{equation*}
\Phi_2=\frac{F_1^2}{4} -F^2 M_0.
\end{equation*}
To simplify the expression of $C_2=\Tr(A_{\Phi_2}^2)/2$ it is convenient to solve equations \eqref{eq5_3} with respect to $y_1,y_2$:
\begin{equation*}
\begin{array}{l}
\displaystyle  y_1=\frac{1}{\ve_1 ^2 x_1 x_2(x_1-x_2)}\left\{x_1 W_1^2-x_2 W_2^2+\rmi\,\ve_1  x_2\bigl( x_2 z_1 W_1+x_1 z_2 W_2\bigr)  \right.\\[3mm]
\displaystyle  \left. \quad +x_1 x_2 \bigl[\ve_1  (x_1-x_2)+\rmi \ve_1  x_1 w_3\bigr]\right\}, \\[3mm]
\displaystyle  y_2=\frac{1}{\ve_1 ^2 x_1 x_2(x_1-x_2)}\left\{x_1 W_1^2-x_2 W_2^2+\rmi\,\ve_1  x_1\bigl( x_2 z_1 W_1+x_1 z_2 W_2\bigr)  \right.\\[3mm]
\displaystyle  \left. \quad +x_1 x_2 \bigl[\ve_1  (x_1-x_2)+\rmi \ve_1  x_2 w_3\bigr]\right\}.
\end{array}
\end{equation*}
After substitution of these values into $C_2$, the latter gets the following representation
\begin{equation*}
  C_2 = - 4 r^{4} M_0 L^2.
\end{equation*}
Thus, we obtain the complete result.
\begin{theorem}\label{theo5}
The outer type of the points of the critical subsystem $\mm_2$ is defined in terms of the partial integrals $M$ and $L$ as follows.
If $M=\ve_1 ^2$ or $L=0$ the points are degenerate; if $L \ne 0$ then in the part $M > \ve_1 ^2$ all points have the outer type ``center'' and in the part $M < \ve_1 ^2$ all points have the outer type ``saddle''.
\end{theorem}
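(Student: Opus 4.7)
The plan is to compute the coefficient $C_\Phi$ in the characteristic polynomial \eqref{eq7_1} for a suitably chosen integral $\Phi$ that vanishes together with its differential on $\mm_2$, and then to recognize the outcome as a product of the partial integrals $M$ and $L$ from Proposition~\ref{prop4}.

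Before the main computation, a useful heuristic is available from the splitting \eqref{eq7_3}--\eqref{eq7_4}. With $\Psi_\pm$ defined as in the text preceding the statement, the subsystem $\mm_2$ falls into the part $\Psi_- = 0$ (where $p^2 h - 2g + \ve_1^2 r^4 < 0$) and the part $\Psi_+ = 0$ (where $p^2 h - 2g + \ve_1^2 r^4 > 0$). On the first, the right-hand side of \eqref{eq7_3} is a positive-definite quadratic form in the transversal coordinates $F, \Psi_-$, suggesting outer type ``center''; on the second, the right-hand side of \eqref{eq7_4} is indefinite in $F, \Psi_+$, suggesting ``saddle''. Since by \eqref{eq7_2} the sign of $p^2 h - 2g + \ve_1^2 r^4$ on $\mm_2$ is opposite to $\sgn(M-\ve_1^2)$, the expected transition occurs at $M = \ve_1^2$; however this argument is formal and does not rule out hidden degeneracies, nor does it explain the role of $L$.

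To turn this into a proof I would compute $C_\Phi$ directly. The immediate candidate $\Phi_{\mL_2}$ coming from \eqref{eq5_4} and \eqref{eq2_11} yields a hopeless characteristic polynomial, but the identities \eqref{eq7_2} allow one to rewrite
$$
\Phi_{\mL_2} = \tfrac14 \bigl(F^4 + r^4 F_1^2 - 4 r^4 F^2 M_0\bigr), \qquad M_0 = M - \ve_1^2.
$$
Because $F \equiv 0$ on $\mm_2$, the function $F^4$ has a zero of order four there, so its symplectic linearization satisfies $A_{F^4} \equiv 0$ on $T_x \mP^6$ for $x \in \mm_2$. Dropping the overall nonzero factor $r^4$, it therefore suffices to work with
$$
\Phi_2 = \tfrac14 F_1^2 - F^2 M_0,
$$
and to evaluate $C_2 = \tfrac12 \Tr(A_{\Phi_2}^2)$.

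To make this trace manageable I would solve the invariant relations \eqref{eq5_3} for $y_1, y_2$ as rational functions of the remaining phase coordinates (the analogue, for $\mm_2$, of the substitution \eqref{eq6_5} used for $\mm_3$), plug these into $A_{\Phi_2}$ written in the variables $\mathbf{u}$ via the Poisson matrix of Section~\ref{sec0}, and discard the three spurious zero eigenvalues coming from the Casimirs \eqref{eq2_15}. The main obstacle is precisely the final algebraic simplification: a priori the trace is a bulky rational expression of high degree, and the clean collapse to
$$
C_2 = -4\, r^4\, M_0\, L^2
$$
is not evident without both invariant relations being used at once. Once this factorization is secured (via computer algebra), the trichotomy is immediate from the sign criterion under \eqref{eq7_1}: $C_2 < 0$ gives ``center'' on $\{M > \ve_1^2,\ L \ne 0\}$, $C_2 > 0$ gives ``saddle'' on $\{M < \ve_1^2,\ L \ne 0\}$, and $C_2 = 0$ gives degeneracy exactly on $\{M = \ve_1^2\} \cup \{L = 0\}$.
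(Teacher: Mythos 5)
Your proposal is correct and follows essentially the same route as the paper: the same heuristic splitting via $\Psi_\pm$ and \eqref{eq7_3}--\eqref{eq7_4} (with the same caveat that it must be confirmed), the same reduction of $\Phi_{\mL_2}$ to $\Phi_2=\tfrac14 F_1^2-F^2M_0$ using \eqref{eq7_2} and $A_{F^4}\equiv 0$ on $\mm_2$, and the same computation of $C_2=\Tr(A_{\Phi_2}^2)/2$ after eliminating $y_1,y_2$ from \eqref{eq5_3}, yielding $C_2=-4r^4M_0L^2$. Nothing essential is missing.
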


Again we see that the statement obtained \textit{a priory} from the first integrals should be corrected by adding to the set of degenerate points the whole set $L=0$, on which the form induced by the symplectic structure is degenerate.


Consider the critical subsystem $\mm_3$. As we mentioned above, equation~\eqref{eq2_10} for the corresponding surface is too complicated. Nevertheless, having the identities \eqref{eq6_4} we easily get on $\mm_3$ the following dependency of the differentials with coefficients expressed in terms of the partial integrals
\begin{equation*}
\displaystyle  2 \rmd G -(p^2-T) \rmd H +\frac{S}{\ve_0 +2\ve_1 ^2 S} \rmd K=0.
\end{equation*}
Therefore, we can take
\begin{equation*}
  \Phi_3 = \displaystyle  2 G -(p^2-T) H +\frac{S}{\ve_0 +2\ve_1 ^2 S} K + {\rm const},
\end{equation*}
calculate the needed term of the characteristic polynomial considering $S,T$ as constants, and \textit{after} that make the substitution of $S,T$ from \eqref{eq6_3} and of $y_1,y_2$ as the solutions of the invariant relations \eqref{eq6_2} from \eqref{eq6_5}. Then for the eigenvalues defining the outer type we have $\mu^2 =C_3$, where
\begin{equation*}
\begin{array}{l}
  \displaystyle  C_3(S,T)= \frac{\ve_0 \bigl[ (p^2-2S^2)^2-r^4\bigr]+4 ({\ve_0 +2\ve_1 ^2 S}) S^2 T}{2({\ve_0 +2\ve_1 ^2 S})^2 S} \left[(\ve_0 +2\ve_1 ^2 S)^2 T -2 \ve_0 \lambda^2 S \right].
\end{array}
\end{equation*}
Since according to the first equation in \eqref{eq6_4} the functions $T$ and $H$ on $\mm_3$ are connected by linear dependency, we can also express the value $\mu^2$ in terms of $H,S$
\begin{equation}\label{eq7_5}
\begin{array}{l}
  \displaystyle  C_3(H,S) = \frac{1}{2({\ve_0 +2\ve_1 ^2 S})^2 S}\left[ 12 \ve_0 S^4 - 8 \bigl(h-\frac{\lambda^2}{2}\bigr) S^3 + \ve_0 (p^4-r^4)\right] {\times} \\[3mm]
  \displaystyle  \qquad {\times} \left\{ (\ve_0 +2\ve_1 ^2 S) \bigl[\ve_0 (p^2+2 S^2)-2 (h-\frac{\lambda^2}{2})S \bigl]  -2 \ve_0 \lambda^2 S  \right\}.
\end{array}
\end{equation}
This representation is useful when iso-energetic bifurcation diagrams and corresponding bifurcations are investigated. Thus, we obtain the following result.

\begin{theorem}\label{theo6}
The outer type of the points of the critical subsystem $\mm_3$ is defined in terms of the first integrals by the sign of $C_3$. The points with $C_3<0$ have the outer type ``center'', and the points with $C_3>0$ have the outer type ``saddle''. If $C_3=0$, then the points are degenerate critical points with respect to the system on $\mP^6$.
\end{theorem}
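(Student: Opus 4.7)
\emph{Proof plan.} The plan is to follow the general recipe of Section~\ref{sec6} applied to the integral $\Phi_3$ constructed just before the theorem. At a chosen point $x \in \mm_3$ of rank two, set $s_0 = S(x)$, $t_0 = T(x)$, freeze these values in $\Phi_3$, and adjust the additive constant so that the resulting function
\[
\widetilde{\Phi}_3 = 2 G - (p^2 - t_0) H + \frac{s_0}{\ve_0 + 2 \ve_1^2 s_0}\, K + {\rm const}
\]
satisfies $\widetilde{\Phi}_3(x) = 0$. The equation $\rmd \widetilde{\Phi}_3(x) = 0$ is precisely the linear dependency of $\rmd H$, $\rmd K$, $\rmd G$ recorded right before the statement, so $\widetilde{\Phi}_3$ is admissible in the sense of Section~\ref{sec6} and produces at $x$ a well-defined symplectic operator $A_{\widetilde{\Phi}_3}$ on $T_x \mP^6$. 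The quantity controlling the outer type is $C_3 := \Tr(A_{\widetilde{\Phi}_3}^2)/2$, that is, the coefficient of $\mu^7$ in $\chi_{\widetilde{\Phi}_3}(\mu)$ as in \eqref{eq7_1}.

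Next I compute this coefficient. With $s_0, t_0$ frozen, $\widetilde{\Phi}_3$ is a polynomial in the phase variables \eqref{eq2_14} whose Hessian at $x$, combined with the Poisson bracket table of Section~\ref{sec0}, provides $A_{\widetilde{\Phi}_3}$ explicitly. I compute $\Tr(A_{\widetilde{\Phi}_3}^2)$ as a rational function of the coordinates of $x$. Then I enforce $x \in \mm_3$: under the generic assumption $w_1 w_2 Q \ne 0$ the invariant relations \eqref{eq6_2} can be solved for $y_1$ and $y_2$ by the explicit formulas \eqref{eq6_5}, and the pair $(s_0, t_0)$ may be written out via \eqref{eq6_3}. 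Substituting these expressions into the raw trace, a direct algebraic simplification is expected to collapse everything into the rational function of $S$ and $T$ announced in the statement; an application of the first equation in \eqref{eq6_4} allows one to trade $T$ for $H$ and obtain the alternative representation \eqref{eq7_5}. Once this reduction is verified, $\chi_{\widetilde{\Phi}_3}(\mu)$ has the required form \eqref{eq7_1} with $C_{\widetilde{\Phi}_3} = C_3$, and the trichotomy center/saddle/degenerate for the outer type follows from the criterion recalled just after \eqref{eq7_1}.

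The principal obstacle is the algebraic collapse in the last paragraph: the raw expression for $\Tr(A_{\widetilde{\Phi}_3}^2)$ involves all nine variables of \eqref{eq2_14} and is unwieldy; the nontrivial content of the theorem is that after the substitutions \eqref{eq6_5} and \eqref{eq6_3} it reduces to a function of the two scalars $S$ and $H$ only, as required by the general fact that $C_{\widetilde{\Phi}_3}$ must be constant on each integral leaf of the system induced on $\mm_3$. Achieving this collapse relies on the specific structure of the invariant relations \eqref{eq6_2} and the resulting explicit parametrisation \eqref{eq6_5}, and is the step that warrants the use of computer algebra.
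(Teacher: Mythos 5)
Your proposal follows exactly the paper's own route: the author takes $\Phi_3 = 2G-(p^2-T)H+\frac{S}{\ve_0+2\ve_1^2S}K+{\rm const}$ with $S,T$ frozen as constants (justified by the displayed dependency of differentials derived from \eqref{eq6_4}), computes $C_3=\Tr(A_{\Phi_3}^2)/2$ as the relevant coefficient of the characteristic polynomial, and only afterwards substitutes $S,T$ from \eqref{eq6_3} and $y_1,y_2$ from \eqref{eq6_5} to collapse the trace to $C_3(S,T)$, converting to $C_3(H,S)$ via the first identity in \eqref{eq6_4}. The computer-algebra collapse you flag as the principal obstacle is likewise the (unprinted) computational core of the paper's argument, so the proposal is correct and essentially identical in approach.
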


Comparing \eqref{eq6_6} and \eqref{eq7_5} we see that, similar to both previous subsystems, all points at which the induced symplectic structure degenerates are degenerate for the complete system due to their outer type with respect to $\mm_3$. In fact one can prove the following general statement.

\begin{prop}\label{prop55}
Suppose a submanifold $\mm$ of a symplectic manifold $\mathcal{N}$ is defined by the system of independent equations $f_1=0,\,f_2 =0$. Let $\mm=\{x\in \mathcal{N}: {\Phi(x)=0}, \,{\rmd \Phi(x) =0}\}$ for some function $\Phi$ on $\mathcal{N}$. Let $\mu^2$ be the square of the eigenvalues of the symplectic operator $A_\Phi$ restricted to a 2-dimensional section transversal to $\mm$. Then $\{f_1,f_2\}=0$ yields $\mu^2=0$.
\end{prop}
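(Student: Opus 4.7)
The plan is to exploit the fact that $\Phi$ vanishes to second order on $\mm$. Since $\mm$ is cut out by the independent equations $f_1=0,\,f_2=0$ and both $\Phi$ and $\rmd\Phi$ vanish on $\mm$, a double application of Hadamard's lemma yields a smooth local representation
\begin{equation*}
\Phi = \alpha_{11}\,f_1^2 + 2\alpha_{12}\,f_1 f_2 + \alpha_{22}\,f_2^2
\end{equation*}
in a neighborhood of any point $x \in \mm$. This reduces the whole question to a computation with the two defining functions $f_1,f_2$.

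Next I would compute the linearization $A_\Phi$ of $\sgrad\Phi$ at $x$. Writing $X_i=\sgrad f_i$ and using the derivation property of $\sgrad$, all contributions proportional to $\sgrad\alpha_{ij}$ carry a factor $f_i f_j$ which drops out after differentiating at $x$, where $f_1(x)=f_2(x)=0$. What survives is
\begin{equation*}
A_\Phi v = 2[\alpha_{11}\,\rmd f_1(v)+\alpha_{12}\,\rmd f_2(v)]\,X_1(x) + 2[\alpha_{12}\,\rmd f_1(v)+\alpha_{22}\,\rmd f_2(v)]\,X_2(x)
\end{equation*}
for every $v\in T_x\mathcal{N}$. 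In particular, $\mathrm{Im}\,A_\Phi \subseteq \mathrm{span}(X_1(x),X_2(x))$, and $\ker A_\Phi \supseteq T_x\mm$ since $\rmd f_1,\rmd f_2$ vanish on $T_x\mm$.

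The key step is to invoke the hypothesis $\{f_1,f_2\}(x)=0$. By the standard identities $\rmd f_j(X_i)=\{f_i,f_j\}$ together with $\rmd f_i(X_i)=0$, the vanishing of this bracket forces $\rmd f_1$ and $\rmd f_2$ to annihilate both $X_1(x)$ and $X_2(x)$, so these two vectors lie in $T_x\mm$. Combining this with the previous display gives $\mathrm{Im}\,A_\Phi \subseteq T_x\mm \subseteq \ker A_\Phi$, hence $A_\Phi^2=0$. Therefore every eigenvalue of $A_\Phi$ is zero; in particular the two eigenvalues $\mu$ picked out by the restriction to a 2-dimensional transversal section satisfy $\mu^2=0$.

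I expect the main subtlety to be the rigorous justification of the quadratic representation of $\Phi$. The first application of Hadamard's lemma uses $\Phi|_\mm=0$ to write $\Phi=f_1 g_1+f_2 g_2$, and the additional condition $\rmd\Phi|_\mm=0$ has then to be translated into $g_1|_\mm=g_2|_\mm=0$ so that a second application becomes available. The assumed independence of $\rmd f_1,\rmd f_2$ along $\mm$, which was already used to ensure that $\mm$ is a smooth submanifold of codimension two, is exactly what guarantees that this translation is legitimate and that the coefficient functions $\alpha_{ij}$ can be extracted consistently.
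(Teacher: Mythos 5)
Your argument is correct, and it is worth noting that the paper itself offers no proof of this proposition: the author explicitly writes that ``the strict proof of this statement is not in the frame of this article'' and only records the empirical observation, made in the three worked examples, that $\{f_1,f_2\}$ appears as a factor of $C_\Phi$. Your proof therefore supplies something the paper deliberately omits rather than paralleling an existing argument. The chain of reasoning is sound: the double Hadamard representation $\Phi=\alpha_{11}f_1^2+2\alpha_{12}f_1f_2+\alpha_{22}f_2^2$ is legitimate precisely because $\rmd\Phi|_\mm=0$ together with the independence of $\rmd f_1,\rmd f_2$ forces the first-order coefficients $g_i$ to vanish on $\mm$; the linearization formula for $A_\Phi$ at a zero of the vector field is connection-independent, so the computation giving $\mathrm{Im}\,A_\Phi\subseteq\mathrm{span}(X_1,X_2)$ and $T_x\mm\subseteq\ker A_\Phi$ is valid; and the identities $\rmd f_i(X_i)=0$, $\rmd f_j(X_i)=\pm\{f_1,f_2\}$ then place $X_1(x),X_2(x)$ in $T_x\mm$ when the bracket vanishes, giving $A_\Phi^2=0$ and hence $\mu^2=0$ on any transversal section. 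A small bonus of your setup, worth recording, is that away from the degenerate set the same matrix computation on $\mathrm{span}(X_1,X_2)$ yields the explicit formula $C_\Phi=4\{f_1,f_2\}^2\bigl(\alpha_{12}^2-\alpha_{11}\alpha_{22}\bigr)$, which explains quantitatively why the paper found $\{f_1,f_2\}$ as a multiplier of $C_\Phi$ in all three critical subsystems. The only point to keep an eye on is that the representation of $\Phi$ and the coefficients $\alpha_{ij}$ are local and non-unique, but since the conclusion is pointwise and $A_\Phi$ depends only on $\Phi$, this causes no difficulty.
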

The strict proof of this statement is not in the frame of this article. In the above examples we saw that $\{f_1,f_2\}$ is a multiplier of $C_\Phi$.

\section{Conclusion}\label{sec7}
In this paper we present the explicit equations (invariant relations) of the invariant manifolds of the generalized two field gyrostat system in three cases which extend the famous Appelrot classes of motions of the classical Kowalevski top. Earlier such extensions were obtained for the line of the Kowalevski type tops and gyrostats, including the Kowalevski\,--\,Yehia gyrostat, the Kowalevski\,--\,Bogoyavlensky top in a double field and the Kowalevski\,--\,Reyman\,--\,Semenov-Tian-Shansky gyrostat in a double field.
We point out the number of partial integrals convenient to take as independent integrals while investigating the phase topology of the found critical subsystems and express via these integrals the general integrals of the initial system.

We also reveal some important characteristics of the obtained invariant manifolds, such as smoothness and degeneration of the induced symplectic structure. For all three subsystems we give the explicit equation for the points of degeneration of this \mbox{2-form} by expressing the Poisson bracket of the invariant relations in terms of the first integrals.

It is shown how to use the invariant relations to calculate the type of critical points of the integral map. Such calculation gives main analytical foundation to establish the phase topology of the whole system with three degrees of freedom and provides a useful appendix to the work \cite{RyabTMF}, in which the highly complicated topological analysis of the considered Sokolov\,--\,Tsiganov case was started.

Two other invariant manifolds are known for the GTFG system. One of them consists of the pendulum motions in which the constant direction of the kinetic moment is orthogonal to the plane of the forces intensities \cite{Yeh2}. These mentioned above motions fill a \mbox{2-dim}ensional invariant manifold and can have the type ``focus''. One more 4-dimensional critical manifold was found in \cite{RyabTMF} and does not have any analogues. It exists only for large enough values of the parameter $\ve_1$. Nevertheless, until now it is not proved even for the case $\lambda=0$ that the known subsystems cover completely the critical set of the integral map. The algebraic curve of the Lax representation given in \cite{SokTsiE} always have singular points and does not provide any proofs also. It seems a very interesting problem to find an exhaustive classification of the critical subsystems along with complete classification of the critical points with respect to all parameters of the system.

{\bf Acknowledgements.} This work was partially supported by RFBR and Volgograd Region Authority, research projects No.~13-01-97025, 14-01-00119.


\def\vol{V.~}
\def\jr{ }
\def\pg{P.~}
\def\au{\textit}
\def\nn{No~}

\end{document}